\newtheorem{thm}{Theorem}
\newtheorem{cor}{Corollary}
\newtheorem{lem}{Lemma}
\newtheorem{defn}{Definition}
\newtheorem{remark}{Remark}
\newcommand{\Mv}{\bold{M}}
\newcommand{\Dn}{D^{(n)}}
\newcommand{\En}{E^{(n)}}
\newcommand{\ev}{\bold{e}}
\newcommand{\Gt}{\tilde{G}}
\newcommand{\Pb}{\bar{P}}
\newcommand{\hb}{\bar{h}}
\newcommand{\hu}{\underline{h}}
\newcommand{\CE}{C_{MMSE}}
\newcommand{\jo}{e^{j\omega}}
\newcommand{\Shv}{\hat{\mathbf{S}}}
\newcommand{\tx}{\mbox}{}
\newcommand{\nn}{\nonumber}
\def\cov{\mathop{\rm Cov}\nolimits}%
\newcommand{\Cc}{\mathcal{C}}
\newcommand{\Mc}{\mathcal{M}}
\newcommand{\Nc}{\mathcal{N}}
\newcommand{\Sc}{\mathcal{S}}
\newcommand{\Xc}{\mathcal{X}}
\newcommand{\Yc}{\mathcal{Y}}
\newcommand{\Zc}{\mathcal{Z}}
\newcommand{\Yi}{{Y_1,Y_2,\ldots,Y_{i-1}}}
\newcommand{\Xv}{{\bf X}}
\newcommand{\Yv}{{\bf Y}}
\newcommand{\Sv}{{\bf S}}
\newcommand{\xv}{{\bf x}}
\newcommand{\pen}{{P_e^{(n)}}}
\newcommand{\Kb}{{\bar{K}}}
\newcommand{\Mh}{{\hat{M}}}
\newcommand{\Sh}{{\hat{S}}}
\newcommand{\ft}{\tilde{f}}
\def\b{\beta}
\def\e{\epsilon}
\def\l{\lambda}
\DeclareMathOperator\E{\mathsf{E}}
\let\P\relax
\DeclareMathOperator\P{\mathsf{P}}
\newcommand{\N}{\mathrm{N}}
\newcommand{\U}{\mathrm{Unif}}
\def\textiid{i.i.d.\@\xspace}
\newcommand\iid{\ifmmode\text{ i.i.d. } \else \textiid \fi}
\newcommand{\Complex}{\mathbb{C}}
\newcommand{\Real}{\mathbb{R}}
\newcommand{\half}{\frac{1}{2}}
\author{Ehsan Ardestanizadeh, {\em Student Member, IEEE} and Massimo Franceschetti, {\em Member, IEEE}  
\thanks{E.~Ardestanizadeh, M.~Franceschetti are with the Department of Electrical and Computer Engineering,
University of California, San Diego, La Jolla, CA, 92093-0407,
email: ehsan@ucsd.edu, massimo@ece.ucsd.edu.
}}
\date{}
\title{Control-theoretic Approach to Communication with Feedback: Fundamental Limits and Code Design}
\begin{document}

\maketitle
\begin{abstract}
Feedback communication is studied from a control-theoretic perspective, mapping the communication problem to a control problem in which the control signal is received 
through the same noisy channel as in the communication 
problem, and the (nonlinear and time-varying) dynamics of the system determine a subclass of encoders available at the transmitter. 
The {\em MMSE capacity} is defined to be the supremum exponential decay rate of the mean square decoding error. This is upper bounded by the information-theoretic feedback capacity, which is the supremum of the achievable rates. A sufficient condition is provided under which the upper bound holds with equality. 
For the special class of stationary Gaussian channels, a simple application of Bode's integral formula shows that the feedback capacity, recently characterized by Kim, is equal to the maximum {\em instability} that can be tolerated by the controller under a given power constraint. Finally, the control mapping is generalized to the $N$-sender AWGN multiple access channel. It is shown that Kramer's code for this channel, which is known to be sum rate optimal in the class of generalized linear feedback codes, can be obtained by solving a linear quadratic Gaussian control problem. 
\end{abstract}

\section{Introduction}
Feedback loops are central to many engineering systems. Their study naturally falls at the intersection between communication and control theories. However, the information-theoretic approach and the control-theoretic one have often evolved in isolation, separated by almost philosophical differences. In this paper we attempt one step at  bridging the gap, showing how tools from both disciplines can be applied to study fundamental limits of feedback systems and to design efficient codes for communication in the presence of feedback.


Consider the feedback communication problem over an arbitrary point-to-point channel depicted in Fig 1a. The encoder, which has access to the channel outputs causally and noiselessly, wishes to communicate a continuous message $M \in (0,1)$ to the decoder through $n$ channel uses. At the end of the transmissions, the decoder forms an estimate $\Mh$ based on the received channel outputs, and the {\em mean square error} (MSE) of the estimate $\Mh$ represents the performance metric of the communication.

We map this communication problem to the general ({\em nonlinear and time varying}) controlled dynamical system depicted in~Fig.~\ref{fig:con-com}b, in which the initial state of the system corresponds to the message $M$, and the control actions--received through the same noisy channel as in the communication problem--correspond to the transmitted signals by the encoder. In this representation, the set of controllers for a given system corresponds to a subclass of encoders for the communication problem. In fact, the system can be viewed as a filter which determines the {\em information pattern}~\cite{Witsenhausen2}, on which the transmitted signals (actions) by the encoder (controller) can depend upon. A similar mapping for the special case of {\em linear time-invariant} (LTI) systems and controllers was first presented in~\cite{Elia2004}.

We study three different channel models. First, we consider a general point-to-point channel. The {\em MSE exponent} is defined as the exponential decay rate of the MSE in block length $n$, and the (feedback) {\em minimum mean square error (MMSE) capacity} is defined as the supremum of all achievable MSE exponents with feedback. We show that the MMSE capacity is upper bounded by the information-theoretic (feedback) capacity, the supremum of all achievable rates with feedback. We also present a sufficient condition, under which the (information-theoretic) capacity coincides with the MMSE capacity. These results provide a step towards the understanding of the connection between estimation and information theory.

Second, we focus on the stationary Gaussian channel with feedback, the capacity of which was recently characterized by Kim~\cite{YH}. Applying the discrete extension of Bode's result~\cite{DisBode1} (cf. \cite{Bode, Freudenberg}), we observe that the capacity of the Gaussian channel under power constraint $P$ is equal to the maximum instability which can be tolerated by a linear controller with power at most $P$, acting over the same stationary Gaussian channel. This follows almost immediately from the previous results~\cite{Elia2004,YH} and provides a step towards the understanding of the connection between stabilizability over some noisy channel and the capacity of that channel.


Finally, we consider the $N$-sender additive white Gaussian noise (AWGN) multiple access channel (MAC) with feedback depicted in Fig.~\ref{fig:GMAC}a. We show that the linear code proposed by Kramer~\cite{KramerFeedback}, which is known to be optimal among the class of generalized linear feedback codes~\cite{Ehsan}, can be obtained as the optimal solution of a linear quadratic Gaussian (LQG) problem given by a linear time-invariant (LTI) system controlled over a point-to-point AWGN channel where the asymptotic cost is the average power of the controller. These results provide a step towards the understanding of how control tools can be used to design codes for communication.

We now wish to place our results in the context of the related literature. The results on the $N$-sender AWGN-MAC  generalize previous ones of Elia~\cite{Elia2004}, who recovers Ozarow's code~\cite{Ozarow}--a special case of Kramer's code--using control theory. Our approach is different from Elia's in both
the model and the analysis. Our reduction is to a control
problem over a single point-to-point channel for any $N > 2$
number of senders, and our analysis is based on the theory
of LQG control. In contrast, in~\cite{Elia2004} the communication is limited
to $2$-sender MAC, which is mapped to a control problem
also over a 2-sender MAC, and the analysis is based on the
technique of Youla parameterization.

The connection between the MMSE and capacity  has been investigated extensively and from different perspectives in the literature.  For example, in a classic paper Duncan~\cite{Duncan}  expresses the mutual information between a continuous random process and its noisy version corrupted by  white noise, in terms of the causal MMSE. More recently, Forney~\cite{Forney} explains the role of the MMSE  in the context of capacity achieving lattice codes over  AWGN channels. Guo et al.~\cite{Verdu} and Zakai~\cite{Zakai} showed that for a discrete random vector observed through an AWGN channel, the derivative of the  mutual information  between input and output sequences with respect to the signal-to-noise ratio (SNR), is half the (noncausal) MMSE. We point out that these authors study the average MMSE of a vector observed over a noisy channel without feedback as a function of the SNR. In contrast, we consider the estimation of a single random variable (message),  given the observation of a whole block of length $n$, and we look at the exponential decay rate of the MMSE with~$n$, at fixed SNR.  Of more relevance to us is the recent work of Liu and Elia~\cite{Elia2009}, who study linear codes over Gaussian channels obtained using a Kalman filter (KF) approach. For this class of codes, they show that the decay rate of the MMSE equals the mutual information between the message and the output sequence. In contrast, our results for the MMSE capacity are derived based on an information-theoretic approach and hold for all codes over general channels.

Additional works in the literature revealed connections between control theory and information theory.  Without attempting of being exhaustive, we distinguish between those who use information theory to study control systems and those who use control theory to study communication systems. Within the first group, Mitter and Newton~\cite{Mitter03,Mitter05} studied estimation and filtering in terms of information and entropy flows.   In the last decade, Bode-like fundamental limitations in controlled systems have been analyzed with success from an information-theoretic perspective~\cite{iglesias, iglesias2, Nuno07, Nuno08, Yu}. In this context, we point out that our Lemma~\ref{generalupp}, when it is specialized to the case of additive channels, provides an alternative proof of Theorem 4.2 in~\cite{Nuno08}. 

Within the second group, Elia~\cite{Elia2004} was the first to map linear codes for  additive white Gaussian noise channels to an LTI system controlled over an AWGN channel. Subsequently, Wu et al.~\cite{Sriram} studied the Gaussian interference channel in terms of estimation and control. Tatikonda and Mitter~\cite{Tatikonda} used a Markov  decision problem (MDP) formulation to study the capacity of Markov channels with feedback, and recently Coleman~\cite{Coleman} considered the design of the feedback encoder from a stochastic control perspective. Finally,  we also refer the reader to the work in \cite{MitterSurvey}, which gives an historical perspective and contains selected additional references.

The rest of the paper is organized as follows. Section~\ref{mapping} presents the definitions and the mapping between the feedback communication and the control problem. Section~\ref{MSErate} provides the upper bound on the MMSE capacity for a general point-to-point channel. The point-to-point stationary Gaussian channels and the AWGN multiple access channel are considered in Section~\ref{SGC} and Section~\ref{MAC}, respectively. Finally, Section~\ref{con} concludes the paper. 

Notation: A random variable is denoted by an upper case letter~(e.g. $X,Y,Z$) and its realization is denoted by a lower case letter~(e.g. $x,y,z$). Similarly, a random column vector and its realization are denoted by bold face symbols~(e.g. $\Xv$ and $\xv$, respectively). Uppercase letters~(e.g. $A,B,C$) also denote matrices, which can be differentiated from a random variable based on the context. The $(i,j)$-th element of $A$ is denoted by $A_{ij}$, and notation $A^T$ and $A'$ denote the transpose and complex transpose of matrix $A$, respectively. We use the following short notation for covariance matrices $K_{\Xv\Yv}:=\E(\Xv\Yv')-\E(\Xv)\E(\Yv')$ and $K_{\Xv}:=K_{\Xv\Xv}$. 

\begin{figure*}[htbp]
   \centering
   \begin{picture}(300,220)(0,0)
               
      \put(2,0){   
   \psfrag{f}[b]{ \hspace{7 em}\small $S_{i}=g_i(S_{i-1},Y_{i-1})$}
\psfrag{s}[b]{ \hspace{6.5em} System}
\psfrag{i}[b]{\small  $S_0$}
\psfrag{y}[b]{\hspace{1.5em} \small $Y_i$}
\psfrag{u}[b]{\hspace{0em} \small $X_i$}
\psfrag{z}[b]{\hspace{.3em} \small  $Z_i$}
\psfrag{q}{\hspace{.5em} { Controller }}
\psfrag{c}[b]{\hspace{6.5em} \small $Y_i=h_i(X_i,Z_i)$}
\psfrag{ch}[b]{\hspace{7em} Channel}

\psfrag{e}[b]{ \hspace{6.5em} \small  $X_i=\pi_i(S_{i})$}

   \includegraphics[width=3.15in]{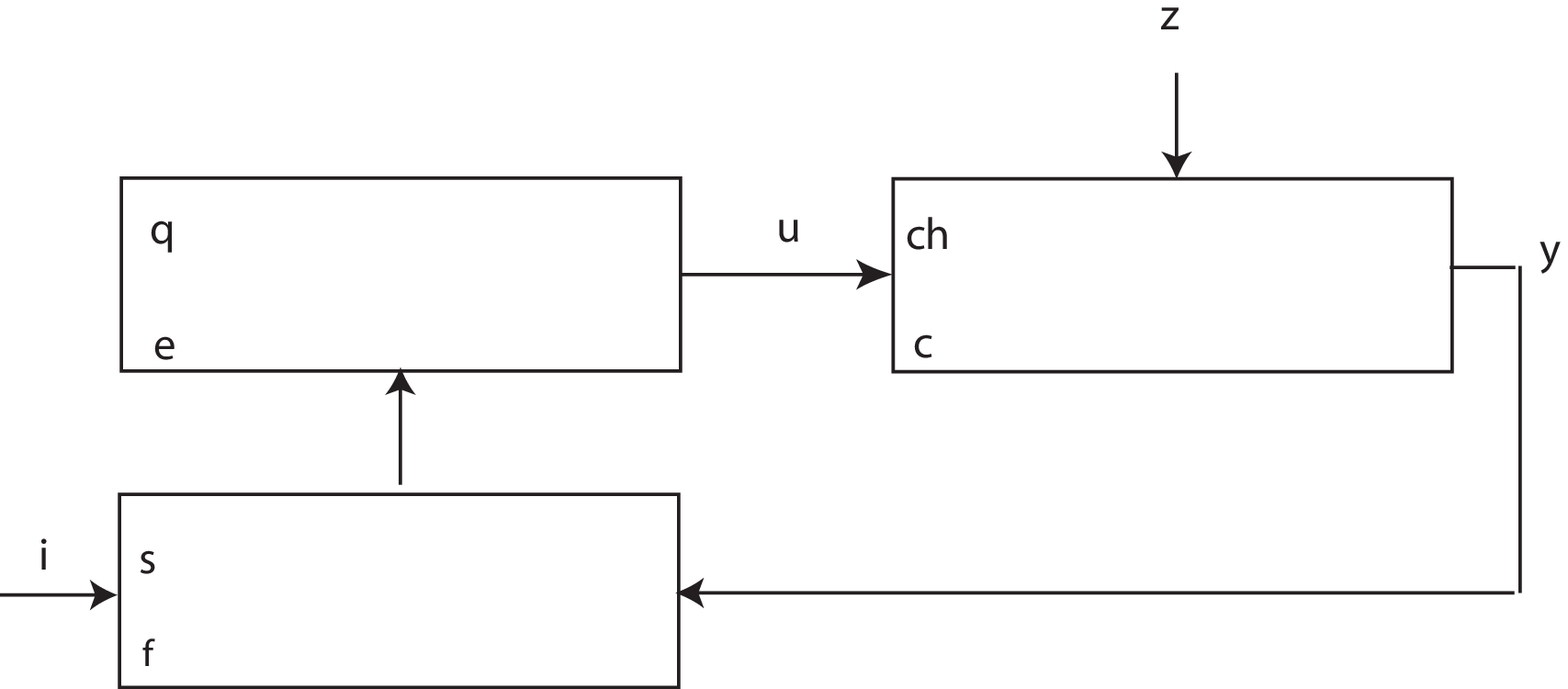} } 
   
  \put(-100,90){\scriptsize (a) Feedback communication  }  
   
   \psfrag{c}[b]{\hspace{6em}  \small $ Y_i=h_i(X_i,Z_i)$}
\psfrag{ch}[b]{\hspace{6.5em} Channel}
\psfrag{e}[b]{ \hspace{6em}  \small $X_i=f_i(M,Y^{i-1})$}
\psfrag{en}[b]{\hspace{6.8em}Encoder}
\psfrag{d}[b]{\hspace{4.5em}  \small$ \Mh=D(Y^{n})$}
\psfrag{de}[b]{\hspace{5em}Decoder}
\psfrag{m}[b]{\hspace{0em} \small $M$}
\psfrag{t}[b]{\hspace{1em}  \small $\Mh$}

\psfrag{u}[b]{\hspace{0em} \small $X_i$}
\psfrag{y}[b]{\hspace{0em}  \small$Y_i$}
\psfrag{z}[b]{\hspace{0em} {  \small $Z_i$}}
           
   \put(0,120){ 
           \includegraphics[width=4.5in]{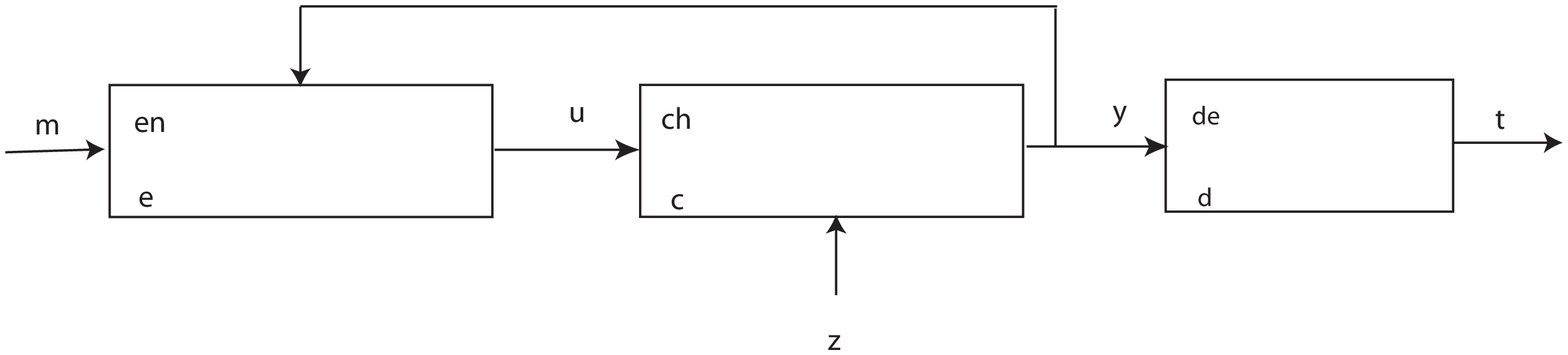}    }       
\put(-100,0){\scriptsize (b) Control over noisy channel  }  
                   
   \end{picture}   
   \vspace{.15in}
   \caption{An arbitrary encoder for the feedback communication over a stochastic channel depicted in (a) can be represented as a dynamical system controlled over the same channel depicted in (b).} 
  \label{fig:con-com}
\end{figure*}


\section{Definitions and Control Approach}\label{mapping}

Consider the communication problem depicted in Fig.~\ref{fig:con-com}a, where the sender wishes to convey a message $M \in \Mc:=(0,1)$ to the receiver through $n$ uses of a stochastic channel, 
\begin{align*}
Y_i=h_i(X_i,Z_i), \quad i=1,\ldots,n.
\end{align*}
where $X_i \in \Xc$ and $Y_i \in \Yc$ denote the input and output of the channel, respectively, and $Z_i \in \Zc$ denote the noise at time~$i$. The set of mappings 
\[h_i : \Xc \times \Zc \to \Yc, \quad  i=1,\ldots,n\]
and the distribution of the noise sequence $\{Z_i\}_{i=1}^n$ determine the channel. The noise process $\{Z_i\}$ is assumed to be independent of the message $M$. We assume that the output symbols are causally fed back to the sender and the transmitted symbol~$X_i$ at time $i$ can depend on both the previous
channel output sequence $Y^{i-1}:= \{Y_1, Y_2, . . . , Y_{i-1}\}$ and the
message $M$. 
\begin{defn}\label{ncodedef}
We define a (feedback) $n$-code as 
\begin{enumerate}
\item an encoder: a set of (stochastic)\footnote{For stochastic encoders we can write $X_i$ as a function of $(M,Y^{i-1},V_i)$, where $\{V_i\}$ is a random process independent of $M$ and $\{Z_i\}$.} encoding maps $f_i: \Mc \times \Yc^{i-1} \to \Yc$, also known to the receiver, such that for each 
$i=1,\ldots,n$ 
 \begin{align}\label{transmitted}
 X_i=f_i(M,Y^{i-1})
 \end{align}
and
\item a decoder: a decoding map $\phi: \Yc^n \to \Mc$ which determines the estimate of the message $\Mh$  based on the received sequence $Y^n$, i.e., 
\begin{align}\label{decoderdef}
\Mh=\phi(Y^n).
\end{align}
\end{enumerate}
\end{defn}
We assume that the message $M \in (0,1)$ is a random variable uniformly distributed over the unit interval and does not depend on $n$. As the performance measure of the communication, we consider the MSE,
\begin{align}\label{costmmse}
D^{(n)}:=\E\Big((M-\phi(Y^n))^2\Big).
\end{align}
where the expectation is with respect to randomness of both the message and the channel.
Note that the decoder does not affect the joint distribution of $(M,X^n,Y^n,Z^n)$ and simply estimates the message at the end of the block. Hence, without loss of generality, we pick the optimal decoder, namely, the MMSE estimator of the message given $Y^n$, and we call an encoder optimal if it minimizes $\Dn$. Let
\[\En:=-\frac{1}{2n}\log(\Dn)\]
be the exponential decay rate of the MSE with respect to~$n$. 
\begin{defn}
The {\em MSE exponent} $E$ is called achievable (with feedback) if there exists a sequence of $n$-codes such that
\begin{align}\label{Edef}
E \leq \liminf_{n \to \infty} \En.
\end{align}
\end{defn}
The {\em MMSE capacity} $\CE$ is the supremum of all achievable MSE exponents.

The communication problem described above can be viewed as a control problem where the encoder wishes to control the knowledge of the receiver about the message. In his notable paper~\cite{Witsenhausen}, Witsenhausen wrote: ``When communication problems are considered as control problems (which they are), the information pattern is never classical since at least two stations, not having access to the same data, are always involved''. However, this is not the case for the feedback communication problem in Fig.~\ref{fig:con-com}a since the encoder (controller) has access also to the information available at the decoder via feedback. In fact, below we show that this feedback communication problem can be represented as a control problem in which the controller has the complete state information.

Consider the control problem in Fig.~\ref{fig:con-com}b where the state at time $i$ is 
\begin{align}\label{systemgen}
S_i=g_i(S_{i-1}, Y_{i-1}), \quad i=1,\ldots,n
\end{align}
with initial state 
\[S_0=M\]
and $Y_0=\emptyset$. We refer to the mappings $\{g_i\}_{i=1}^n$, 
\[g_i : \Sc_{i-1} \times \Yc \to  \Sc_{i},  \quad \quad i=1,\ldots,n \]
as the {\em system}. The controller, which observes the current state~$S_{i}$, picks an action (symbol) $X_i \in \Xc$, 
\begin{align}\label{transmitted2}
X_i=\pi_i(S_{i}), \quad i=1,\ldots,n 
\end{align}
according to a set of (stochastic) 
mappings
\[\pi_i : \Sc_{i} \to  \Xc , \quad i=1,\ldots,n.\]
We refer to the set $\{\pi_i\}_{i=1}^n$ as the {\em controller}. 

The communication problem in Fig.~\ref{fig:con-com}a can be represented as the control problem Fig.~\ref{fig:con-com}b as follows. Let the system $\{g_i\}_{i=1}^n$ be such that the state $S_i$ at time~$i$ is the collection of the initial state $S_0=M$ and the past observations $Y^{i-1}$, namely,
\begin{align}\label{stateally}
S_i=(M,Y^{i-1}) \quad i=1,\ldots, n.
\end{align}
Also, let the controller $\{\pi_i\}_{i=1}^n$ be picked according to the encoder $\{f_i\}_{i=1}^n$ in the communication problem such that 
\[ \pi_i(S_{i}) = f_i(M,Y^{i-1}), \quad \quad i=1,\ldots,n.\]
Then the joint distribution of all the random variables $(M,X^n,Y^n,Z^n)$ in the control problem is the same as that in the communication problem. 

To complete the representation, let $\Sh_0(Y^n)$ be the MMSE estimate of the initial state $S_0$ based on~$Y^n$, and the final cost be
\[c_n(S_n):=(S_0-\Sh_0(Y^n))^2.\]
We call a controller optimal if it minimizes the final expected cost
\begin{align*}
\E(c_n(S_n))&=\E\Big((M-\Mh(Y^n))^2\Big)\\
&=\Dn.
\end{align*}
Thus, the optimal controller represents the optimal encoder for the communication problem.

The system in~\eqref{stateally} is the most general system which can represent all the encoders for the communication system. However, if the system $\{g_i\}_{i=1}^n$ is more restricted such that the state $S_i$ is a filtered version of $(S_0,Y^{i-1})$, then the controller $\{\pi_i\}_{i=1}^n$ represents only a subclass of encoders $\{f_i\}_{i=1}^n$ where the transmitted symbol $X_i$ depends on $(M,Y^{i-1})$ only through $S_i$ (see \eqref{transmitted2}). In that case, we can view the system as a {\em filter} which determines the {\em information pattern} available at the controller (encoder). Below, we show a subclass of encoders for which the state $S_i$ does not include all the past output $Y^{i-1}$ as in~\eqref{stateally}, yet it contains all the optimal encoders. Let $F_{M|Y^{i-1}}(\cdot|Y^{i-1})$ be the conditional distribution of the message $M$ given the channel outputs $Y^{i-1}$.
\begin{lem}\label{infostate}
An optimal encoder which minimizes the MSE $\Dn$, can be found in the subclass of encoders which is determined by the system with state of the form $S_i=(M,F_{M|Y^{i-1}}(\cdot|Y^{i-1}) )$.
\end{lem}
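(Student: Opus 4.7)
The plan is to recast the encoder's problem as a finite-horizon Markov decision process (MDP) in which the decoder's posterior $B_i := F_{M|Y^{i-1}}(\cdot\mid Y^{i-1})$ serves as a sufficient statistic for the history $Y^{i-1}$, and then invoke the standard dynamic programming (DP) principle. The first step is to express the cost as a terminal functional of the final posterior: since the decoder is fixed as the MMSE estimator, the tower property of conditional expectation gives
\[
\Dn \;=\; \E\bigl[(M-\E[M\mid Y^n])^2\bigr] \;=\; \E\bigl[\Var(M\mid Y^n)\bigr] \;=\; \E\bigl[g(B_{n+1})\bigr],
\]
with $g(\mu):=\int m^2\,\mu(dm)-\bigl(\int m\,\mu(dm)\bigr)^2$ the variance functional and $B_{n+1}=F_{M|Y^n}(\cdot\mid Y^n)$. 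Hence minimizing $\Dn$ is equivalent to minimizing a terminal functional of the decoder's belief process $\{B_i\}$.

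The second step is to identify the encoder's effective action at time $i$: given $Y^{i-1}$, it is the entire (possibly stochastic) encoding kernel $\kappa_i(dx\mid m')$, $m'\in\Mc$, and not the realized symbol $X_i$ alone. The reason is that the decoder's one-step Bayes update
\[
B_{i+1}(m') \;\propto\; B_i(m')\int P(Y_i\mid X_i=x)\,\kappa_i(dx\mid m')
\]
requires the kernel evaluated at every hypothetical $m'\in\Mc$. Together with $P(Y_i\mid X_i,Y^{i-1})=P(Y_i\mid X_i)$ (memoryless channel) and the identity $M\mid B_i\sim B_i$ under the true joint, this yields a finite-horizon MDP whose state is $B_i$, whose action at time $i$ is the kernel $\kappa_i$, whose transition $B_{i+1}=T(B_i,Y_i;\kappa_i)$ is deterministic in $(B_i,Y_i,\kappa_i)$ with $Y_i$ random, and whose terminal cost is $g(B_{n+1})$.

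The third step is a routine application of the finite-horizon DP principle: the MDP admits an optimal policy whose action at each stage depends only on the current state, so $\kappa_i$ may be taken as a function of $B_i$ alone. Realizing this kernel via auxiliary independent randomness $V_i$, the induced encoder is $X_i = \pi_i(M,B_i,V_i)$, which is exactly the restricted subclass parameterized by $S_i=(M,B_i)$ in the statement of the lemma.

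The main obstacle is conceptual rather than technical: the decision variable at time $i$ must be taken as the full stochastic kernel $\Mc\to\Xc$, not as the single realized $X_i$, because the decoder's Bayes update depends on the kernel globally. Once this broader action space is adopted, verifying the Markov structure of the measure-valued belief process is routine bookkeeping, and the sufficient-statistic property of $B_i$ together with DP closes the argument.
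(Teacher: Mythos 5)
Your proposal is correct and is essentially the paper's own argument: both proofs hinge on the same key move of lifting the per-stage decision from the realized symbol $X_i$ to the entire (stochastic) map $\Mc\to\Xc$, so that the encoder becomes a controller observing only $Y^{i-1}$ with unobserved static state $M$, after which the posterior $F_{M|Y^{i-1}}(\cdot|Y^{i-1})$ is a sufficient statistic by standard stochastic-control/dynamic-programming results. The only differences are that you spell out the belief-MDP construction and terminal-cost identity $\Dn=\E[\Var(M|Y^n)]$ that the paper delegates to a citation of \cite[Chapter 6]{Kumar}, and that your Bayes-update step explicitly assumes a memoryless channel ($P(Y_i\mid X_i,Y^{i-1})=P(Y_i\mid X_i)$), whereas the lemma is stated for a general noise process---a gap in generality that the paper's own appeal to ``standard results'' does not really close either.
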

\IEEEproof See Appendix~\ref{appleminfo}.

This lemma, which is based on a well-known result in stochastic control, provides a sufficient information pattern such that the optimal feedback encoders for the communication over a general channel can be built upon. For example, in the special case of memoryless channels, Shayevitz and Feder~\cite{Oferarxiv} proposed an explicit encoder which uses the information pattern described in Lemma~\ref{infostate}, and showed that it is optimal in terms of the achievable rates.  

\section{MMSE Capacity}\label{MSErate}

In this section we present the relationship between the MMSE capacity and the information-theoretic capacity~\cite{Cover--Thomas2006}. 

First, we review the definition of the information-theoretic achievable rates, which is an asymptotic measure based on a sequence of message sets whose size depend on the block length~$n$. Consider the $n$-code in Definition~\ref{ncodedef} and let the message $M_n \sim \U(\Mc_n)$ be uniformly distributed over the set  $\Mc_n:=\{1,2,\ldots,2^{nR}\}$ such that 
 \begin{align}
 X_i=f_i(M_n,Y^{i-1}) , \ i=1,\ldots,n.
 \end{align}
Let probability of error be $\pen=\P(M_n \neq \Mh_n)$. A rate $R$ is called achievable (with feedback) if there exists a sequence of $n$-codes with message sets $\Mc_n$ for which $\pen \to 0$ as $n \to  \infty$. The (feedback) capacity $C$ is the supremum of all achievable rates. 

Recall that the MSE exponent $E$ is defined based on a message $M \in (0,1)$ which does not depend on the block length~$n$. The following lemma provides the connection between the achievable rates and MSE exponents.
\begin{lem}\label{thmMSE}
If the MSE exponent $E$ is achievable, then any rate $R<E$ is achievable. Conversely, if the rate $R$ is achievable and the probability of error satisfies $\lim_{n \to \infty} \frac{\pen}{2^{-2nR}} = 0$, then the MSE exponent $E=R$ is achievable.
\end{lem}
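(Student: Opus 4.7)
I would prove both directions by elementary scalar quantization, converting a code for one problem into a code for the other with lower-order loss. Both steps reduce to Markov's inequality plus a little care about priors.

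\textbf{Forward direction ($E$ achievable $\Rightarrow$ every $R < E$ achievable).} Fix $R < E$, set $K_n = \lceil 2^{nR}\rceil$, and let $V \sim \U(0,\,1/K_n)$ be a shared dither between encoder and decoder. For $W \in \{1,\dots,K_n\}$ uniform, define $M = (W-1)/K_n + V$; joint uniformity in $(W,V)$ makes $M \sim \U(0,1)$, so applying the given MSE-optimal encoder to this $M$ is legal. The decoder first computes $\hat M = \phi(Y^n)$ and then outputs the $\hat W$ whose grid point is closest to $\hat M$. A decoding error forces $|M - \hat M| \ge 1/(2K_n)$, so Markov's inequality and the tower property give
\[
\pen \;\le\; 4K_n^2\, \E[(M-\hat M)^2] \;=\; 4K_n^2\, \Dn.
\]
Choosing $\epsilon \in (0, E-R)$ and invoking $\Dn \le 2^{-2n(E-\epsilon)}$ for large $n$ yields $\pen \le 4\cdot 2^{-2n(E-R-\epsilon)} \to 0$. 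The shared dither is then derandomized by selecting a specific shift $v^\ast$ whose conditional MSE is at most $\Dn$, producing a deterministic $n$-code with the same bound.

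\textbf{Reverse direction (fast rate $R$ $\Rightarrow$ MSE exponent $E = R$).} Given rate-$R$ codes with $\pen = o(2^{-2nR})$, for $M \sim \U(0,1)$ set $W = \lceil K_n M\rceil$, which is uniform on $\{1,\dots,K_n\}$, feed $W$ into the rate-$R$ encoder, and output $\hat M = (\hat W - 1/2)/K_n$. Splitting on whether the discrete decoder succeeds,
\[
\Dn \;\le\; \P(\hat W = W)\cdot \tfrac{1}{4K_n^2} + \P(\hat W \neq W)\cdot 1 \;\le\; 2^{-2nR-2} + \pen \;=\; \bigl(\tfrac14 + o(1)\bigr)\, 2^{-2nR},
\]
so $\En = -\tfrac{1}{2n}\log\Dn \to R$, making the MSE exponent $E = R$ achievable.

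\textbf{Main obstacle.} The only real subtlety is the forward-direction derandomization: once the dither is fixed at $v^\ast$, the effective prior on $M$ concentrates on $K_n$ discrete points rather than being $\U(0,1)$, so one must argue that the MSE guarantee survives under this discrete law. This follows because the expectation over $V$ of the conditional MSE equals the original $\Dn$, so some shift $v^\ast$ attains conditional MSE $\le \Dn$ and the same Markov bound applies. Everything else is routine, and I do not anticipate any deeper difficulty.
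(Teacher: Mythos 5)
Your proof is correct, and its skeleton matches the paper's: both directions hinge on the same Chebyshev-type bound $\pen \leq 4\cdot 2^{2nR}\Dn$ relating error probability to MSE on a grid of spacing $2^{-nR}$, and your reverse direction (quantize $M$ to a cell index, transmit with the reliable code, reconstruct the cell midpoint, bound $\Dn \leq \pen + 2^{-2nR}$) is essentially identical to the paper's. The one genuine difference is how the forward direction handles the mismatch between the continuous uniform prior, for which the MSE guarantee is stated, and the discrete message set needed to claim an achievable rate: the paper invokes Lemma II.3 of Shayevitz and Feder to assert that the message set $\{1,\ldots,2^{nR}\}$ can be embedded as points in $(0,1)$ with pairwise distance $2^{-nR}$ for which the interval decoder succeeds, whereas you construct the embedding yourself via a shared dither $V$ and then derandomize, using $\E_V\bigl[\E[(M-\hat M)^2 \mid V]\bigr] = \Dn$ to extract a shift $v^\ast$ with conditional MSE at most $\Dn$. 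Your route is self-contained and avoids the external citation at the cost of the (correctly handled) bookkeeping that after fixing $v^\ast$ the induced prior is discrete; the paper's route is shorter on the page but leans on the cited lemma for exactly the step you prove by averaging. One cosmetic point: in the reverse direction you only need $\liminf_n \En \geq R$ rather than $\En \to R$, which is what your bound actually delivers.
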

\IEEEproof See Appendix~\ref{appthmMSE}.

According to Lemma~\ref{thmMSE}, showing that an MSE exponent $E > R$ is achievable for a uniform message $M \in (0,1)$ is sufficient to show that rate $R$ is achievable.

Below, we provide a general upper bound on the MMSE capacity. Before proceeding, we need some information-theoretic definitions~\cite{Cover--Thomas2006}. Let $h(X)$ denote the differential entropy of a random variable $X$ and $I(X;Y)$ be the mutual information between random variables $X$ and $Y$. Moreover, let the {\em minimum entropy rate} of a general random process $\{X_i\}_{i=1}^\infty$ be defined as
\begin{align*}
 \hu(X):=&\liminf_{n \to \infty} \frac{h(X^n)}{n}.
 \end{align*}
For a stationary random process, the limit exists~\cite{Cover--Thomas2006} and the {\em entropy rate} is defined as
\begin{align*}
 \hb(X):=&\lim_{n \to \infty} \frac{h(X^n)}{n}.
 \end{align*}
 We call a channel invertible if $Z_i$ can be recovered from $(X_i,Y_i)$, namely, for every time $i > 0$ there exists a function $h_i^{-1}$ such that 
\[Z_i=h_i^{-1}(X_i,Y_i).\]

\begin{lem}\label{generalupp}
For a system controlled over an invertible channel, we have
\[I(S_0;Y^n) \leq h(Y^n)-h(Z^n) \]
where equality holds if and only if $X_i$ is a deterministic function of $(S_0,Y^{i-1})$ for $i=1,\ldots,n$.
\end{lem}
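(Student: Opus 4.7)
The plan is to write $I(S_0;Y^n) = h(Y^n) - h(Y^n\mid S_0)$ and then prove
$$h(Y^n\mid S_0) \;\geq\; h(Z^n),$$
with equality exactly when the encoder is deterministic. This reduces the lemma to a single conditional-entropy inequality.

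First, I would absorb any stochasticity in the controller into auxiliary randomness: introduce a process $V^n$ independent of $(M, Z^n)$ so that $X_i = \tilde{\pi}_i(S_i, V_i)$ is a deterministic function of $(S_i, V_i)$, equivalently of $(S_0, Y^{i-1}, V^n)$. Conditioning reduces entropy, so
$$h(Y^n\mid S_0) \;\geq\; h(Y^n\mid S_0, V^n),$$
and this is the only inequality in the whole chain—equality here will later be the source of the ``iff'' condition. Apply the chain rule $h(Y^n\mid S_0,V^n) = \sum_{i=1}^n h(Y_i\mid Y^{i-1}, S_0, V^n)$. For each $i$, since $(X_1,\ldots,X_i)$ is now a deterministic function of $(S_0, Y^{i-1}, V^n)$ and, by invertibility, $Z^{i-1} = \{h_j^{-1}(X_j,Y_j)\}_{j<i}$ is likewise determined, I can freely add $X_i$ and $Z^{i-1}$ to the conditioning without changing the entropy:
$$h(Y_i\mid Y^{i-1}, S_0, V^n) \;=\; h(Y_i\mid Y^{i-1}, S_0, V^n, X_i, Z^{i-1}).$$

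Next, invertibility gives a bijection $Y_i \leftrightarrow Z_i$ for each fixed value of $X_i$, so $h(Y_i\mid\cdots,X_i,Z^{i-1}) = h(Z_i\mid\cdots,X_i,Z^{i-1})$ (this step is immediate for discrete alphabets; in the continuous case it corresponds to an implicit unit-Jacobian convention, which is the one subtlety of the proof). Because all the extra conditioning variables $(Y^{i-1}, X_i)$ are measurable functions of $(M, V^n, Z^{i-1})$ by a simple induction on the channel/encoder recursion, I can strip them away to get $h(Z_i\mid M, V^n, Z^{i-1})$, and then the independence $Z^n \perp (M, V^n)$ collapses this to $h(Z_i\mid Z^{i-1})$. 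Summing over $i$,
$$h(Y^n\mid S_0, V^n) \;=\; \sum_{i=1}^n h(Z_i\mid Z^{i-1}) \;=\; h(Z^n),$$
which combined with the first inequality yields $I(S_0;Y^n) \leq h(Y^n) - h(Z^n)$.

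For the equality condition, every step above is an equality except the conditioning step $h(Y^n\mid S_0) \geq h(Y^n\mid S_0, V^n)$, which is tight iff $Y^n \perp V^n \mid S_0$. Since $V^n$ enters the joint law only through the encoder's use of it to produce $X^n$, this independence holds iff the encoder does not actually use $V^n$, i.e., iff $X_i$ is a deterministic function of $(S_0, Y^{i-1})$ for every $i$. The main obstacle, and the step to spell out carefully, is the bijection/entropy-preservation argument for continuous channels; I would either state the lemma under a discrete or unit-Jacobian convention, or verify that in the applications (additive Gaussian channels) the Jacobian issue trivializes. The rest is bookkeeping with the chain rule, invertibility, and the noise-independence assumption.
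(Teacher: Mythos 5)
Your proof is correct and follows essentially the same route as the paper's: expand $I(S_0;Y^n)=h(Y^n)-h(Y^n\mid S_0)$, apply the chain rule, use invertibility to trade $Y_i$ for $Z_i$, and strip the conditioning down to $Z^{i-1}$ via the independence of the noise from the message and encoder randomness. The only real difference is bookkeeping: you place the single inequality at conditioning on the encoder's private randomness $V^n$ up front, whereas the paper conditions on $X^i$ term by term and locates the slack in the Markov chain $Y_i\to(S_0,Y^{i-1})\to X_i$; both give the same deterministic-encoder equality condition, and your explicit flag of the unit-Jacobian convention in the step $h(Y_i\mid\cdot,X_i)=h(Z_i\mid\cdot,X_i)$ is a subtlety the paper passes over silently.
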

\IEEEproof See Appendix~\ref{appgeneralupp}.

\begin{remark}
Theorem 4.2 in~\cite{Nuno08} can be recovered from Lemma~\ref{generalupp} by considering the special case of additive channels $Y_i=X_i+Z_i$ which are invertible.
\end{remark}

\begin{thm}\label{MSEupp}
For an invertible channel, the MMSE capacity $\CE$ is upper bounded as
\[ \CE \leq C \leq \sup \ \hu(Y)-\hu(Z).\]
where the supremum is over all sequences of $n$-codes.
\end{thm}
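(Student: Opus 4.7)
The plan is to establish the two inequalities separately. The first, $\CE \le C$, follows almost immediately from Lemma~\ref{thmMSE}. The second, $C \le \sup(\hu(Y)-\hu(Z))$, reduces to a standard Fano-type converse argument in which Lemma~\ref{generalupp} supplies the mutual information upper bound.

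For $\CE \le C$, I would let $E$ be any achievable MSE exponent. By the first part of Lemma~\ref{thmMSE}, every rate $R<E$ is then achievable, hence $R\le C$. Taking the supremum over $R<E$ gives $E\le C$, and the further supremum over achievable $E$ yields $\CE\le C$.

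For $C \le \sup(\hu(Y)-\hu(Z))$, I would fix any achievable rate $R$ together with a corresponding sequence of $n$-codes with uniform message $M_n$ on $\{1,\dots,2^{nR}\}$ and $\pen\to 0$. Fano's inequality gives
\[
nR \;=\; H(M_n) \;\le\; I(M_n;Y^n)+1+\pen\cdot nR.
\]
Viewing the encoder as a controller acting on the initial state $S_0=M_n$ in the representation of Section~\ref{mapping}, Lemma~\ref{generalupp} supplies $I(M_n;Y^n)\le h(Y^n)-h(Z^n)$. Combining and dividing by $n$,
\[
R(1-\pen)-\tfrac{1}{n} \;\le\; \tfrac{h(Y^n)-h(Z^n)}{n}.
\]
Taking $\liminf$ on both sides and using $\pen\to 0$, together with the elementary fact $\liminf(a_n-b_n)\le \liminf a_n-\liminf b_n$, one obtains $R\le \hu(Y)-\hu(Z)$ for this particular sequence of codes. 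Taking the supremum over achievable $R$ and over code sequences yields the claim.

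The main delicate point I anticipate is the $\liminf$ manipulation: the quantity $\liminf_n (h(Y^n)-h(Z^n))/n$ need not split as $\hu(Y)-\hu(Z)$ in general, but the direction $\le$ is the one we need and follows from the elementary inequality above. Apart from that, the proof is bookkeeping---matching the communication setup to the control representation so that Lemma~\ref{generalupp} applies with $S_0=M_n$, and noting that the noise entropy rate $\hu(Z)$ is a property of the channel alone, so the supremum on the right-hand side effectively acts only on $\hu(Y)$ through the choice of encoder.
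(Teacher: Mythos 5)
Your proposal is correct and follows essentially the same route as the paper: the first inequality is the (direct form of the) paper's Lemma~\ref{thmMSE} argument, and the second is the same Fano-plus-Lemma~\ref{generalupp} converse, with your explicit justification of $\liminf(a_n-b_n)\le\liminf a_n-\liminf b_n$ being a slightly more careful rendering of the paper's ``taking the limit and rearranging terms'' step.
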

\begin{remark}
Theorem~\ref{MSEupp} holds without any assumption on the noise sequence or the controlled system. 
\end{remark}
\begin{IEEEproof}

The first inequality
\begin{align}\label{ine1}
\CE \leq C
\end{align}
can be shown using Lemma~\ref{thmMSE}. Suppose \eqref{ine1} does not hold and $ C < \CE $. Then, we can always find an achievable MSE exponent~$E^*$ such that $C < E^*  $, and subsequently we can pick a rate $R^*$ such that $C < R^* < E^*$. Hence, by Lemma~\ref{thmMSE} the rate $R^*$ is achievable, and this contradicts the definition of $C$. Therefore, the inequality~\eqref{ine1} must hold. 

For the second inequality, from Fano's inequality we have
\[R \leq \frac{1}{n} I(S_0,Y^n)+\epsilon_n .\]
where $\e_n \to 0$ as $n \to \infty$. Hence, for a given sequence of $n$-codes
\begin{align}\label{Clim}
R \leq \liminf_{n \to \infty} \frac{1}{n} I(S_0,Y^n).
\end{align}
From Lemma~\ref{generalupp} we have $h(Z^n)+I(S_0;Y^n) \leq h(Y^n)$. Taking the limit and rearranging terms
we get
\[\liminf_{n \to \infty} \frac{1}{n} I(S_0,Y^n) \leq \hu(Y)-\hu(Z).\]
Combining with \eqref{Clim} we have $R \leq \hu(Y)-\hu(Z)$, and taking the supremum over all sequences of $n$-codes completes the proof.
\end{IEEEproof}

Theorem~\ref{MSEupp} shows that in general, (feedback) MMSE capacity is upper bounded by the (feedback) capacity. Below, we provide a sufficient condition on the channel under which this bound is tight. 
\begin{cor}\label{sufficient}
For a given channel, we have
\[\CE=C\]
if any rate $R < C$ can be achieved such that
\begin{align}\label{errorexp}
\lim_{n \to \infty} \frac{\pen}{2^{-2nC}} \to 0.
\end{align}

\end{cor}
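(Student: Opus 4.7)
The plan is to combine Theorem~\ref{MSEupp} with the converse half of Lemma~\ref{thmMSE}. Theorem~\ref{MSEupp} already delivers $\CE \leq C$, so the entire task reduces to proving the reverse inequality $\CE \geq C$ under the stated error-exponent hypothesis.

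First I would fix an arbitrary $R < C$ and invoke the hypothesis to produce a sequence of $n$-codes of rate $R$ whose probability of error satisfies $\pen/2^{-2nC} \to 0$. The key elementary observation is that since $R < C$, one has $2^{-2nR} \geq 2^{-2nC}$ for all $n$, hence
\[
\frac{\pen}{2^{-2nR}} \;\leq\; \frac{\pen}{2^{-2nC}} \;\longrightarrow\; 0.
\]
This is exactly the fast-decay hypothesis required by the converse direction of Lemma~\ref{thmMSE}.

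Then I would apply that direction of Lemma~\ref{thmMSE} to conclude that the MSE exponent $E = R$ is achievable, and therefore $\CE \geq R$. Since $R$ was an arbitrary number strictly less than $C$, letting $R \uparrow C$ yields $\CE \geq C$, and combining with Theorem~\ref{MSEupp} gives $\CE = C$.

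I do not expect a real obstacle here; the statement is essentially a one-line consequence of the ingredients already in place. The only mildly subtle point is recognizing that asking for the error probability to decay faster than $2^{-2nC}$ is a single, uniform hypothesis that simultaneously dominates every target denominator $2^{-2nR}$ with $R < C$, so that Lemma~\ref{thmMSE} can be applied along the whole family of rates approaching $C$ from below.
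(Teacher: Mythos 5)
Your proposal is correct and follows essentially the same route as the paper: apply the converse direction of Lemma~\ref{thmMSE} to each $R<C$ to get $\CE\geq C$, then combine with the upper bound $\CE\leq C$ from Theorem~\ref{MSEupp}. Your explicit observation that $2^{-2nR}\geq 2^{-2nC}$ for $R<C$, so the single hypothesis at exponent $C$ dominates every required condition at exponent $R$, is a small step the paper leaves implicit but is exactly what its proof relies on.
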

\begin{IEEEproof}
By assumption for any rate $R < C$, the condition on the decay rate of probability of error $\pen$ in Lemma~\ref{thmMSE} is satisfied and hence any MSE exponent $E=R < C$ is achievable. Therefore, we conclude that $\CE \geq C$. On the other hand, by Theorem~\ref{MSEupp} we have $\CE \leq C $, and hence $\CE=C$.
\end{IEEEproof}

In the next section we show that this equality holds for some special cases of Gaussian channels including the AWGN channel.  



\section{Point-to-Point Gaussian Channels}\label{SGC}
In this section, we turn our attention to additive Gaussian channels, 
\[Y_i=X_i+Z_i\]
where the noise sequence $\{Z_i\}$ is a (colored) stationary Gaussian process with power spectral density $S_Z(e^{j\omega})$. The transmitted symbols are assumed to satisfy the (block) power constraint $P$, i.e.,
\[\frac{1}{n} \sum_{i=1}^n \E(X_i^2) \leq P.\]

We define the subclass of {\em LTI encoders} to be the encoders which can be represented by an LTI system and a linear controller in Fig.~\ref{fig:con-com}b. In this case, the system and the controller combined can be represented as as a {\em causal} filter
\[F(z)=\frac{X(z)}{Y(z)}\]
where $X(z)$ and $Y(z)$ are the $Z$-transform of the input and the output sequence, respectively. We alternatively refer to $F(z)$ as the open loop transfer function. Let the instability $I$ be
\[I=\sum_{j=1}^m \log(|\beta_j|).\]
where $\beta_j$ are the $m$ unstable poles of the open loop transfer function $F(z)$. 
\begin{thm}\label{SG}
Under power constraint $P$, the capacity of the stationary Gaussian channel is
\[C=\sup \ I\]
where the supremum is over all causal filters $F(z)$ which satisfy the power constraint
\begin{align}\label{pc2}
\frac{1}{2\pi}\int_{-\pi}^{\pi} \left|\frac{F(e^{j\omega})}{1-F(\jo)}\right|^2 S_Z(e^{j\omega}) d\omega \leq P.
\end{align}
\end{thm}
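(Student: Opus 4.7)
The plan is to combine two ingredients already available in the literature: Kim's spectral characterization of feedback capacity for stationary Gaussian channels \cite{YH}, and the discrete-time Bode integral formula \cite{DisBode1}. The control-theoretic mapping from Section~\ref{mapping} is what links them. Kim's result can be stated as
\[
C \;=\; \sup_{B} \ \frac{1}{4\pi}\int_{-\pi}^{\pi} \log|1 + B(e^{j\omega})|^2 \, d\omega,
\]
where the supremum is over strictly causal LTI filters $B(z)$ for which the input $X(z)=B(z)Z(z)$ has average power at most $P$, i.e., $\frac{1}{2\pi}\int |B(e^{j\omega})|^2 S_Z(e^{j\omega})\, d\omega \le P$.

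The next step is to reparameterize Kim's optimization using the control representation of Fig.~\ref{fig:con-com}b. Restricted to the LTI subclass, the system-plus-controller is captured by a single causal filter $F(z)=X(z)/Y(z)$, and the channel relation $Y=X+Z$ gives $Y = Z/(1-F)$ and $X = FZ/(1-F)$. The change of variables $B = F/(1-F)$, equivalently $F = B/(1+B)$, provides a bijection between Kim's parameter and the open-loop filter, under which $1+B = 1/(1-F)$. Substituting into Kim's formula yields
\[
C \;=\; \sup_{F} \ \frac{1}{2\pi}\int_{-\pi}^{\pi} \log\frac{1}{|1 - F(e^{j\omega})|} \, d\omega,
\]
while Kim's power constraint becomes exactly \eqref{pc2} because $|B|^2 = |F/(1-F)|^2$.

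The final step is a direct application of Bode's integral: for any proper rational causal $F(z)$ with unstable poles $\beta_1,\ldots,\beta_m$ such that the closed loop $1/(1-F)$ is stable, one has
\[
\frac{1}{2\pi}\int_{-\pi}^{\pi} \log\frac{1}{|1 - F(e^{j\omega})|}\, d\omega \;=\; \sum_{j=1}^{m} \log|\beta_j| \;=\; I.
\]
Combining this with the previous display gives $C=\sup I$, which is the theorem. The main obstacle is bookkeeping rather than computation: one must verify that the two optimization domains match, meaning that every Kim-admissible $B$ corresponds via the bijection to a causal $F$ with well-defined instability and stable closed loop (so that Bode's identity applies), and vice versa. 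For rational filters of finite order the bijection $F\leftrightarrow B$ handles this at once; the general stationary case requires a standard density/continuity argument using that both the integrand $\log|1-F|^{-1}$ and the power functional \eqref{pc2} are continuous in the appropriate spectral topology, together with the fact that rational filters are dense in the admissible class.
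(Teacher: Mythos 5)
Your proposal is correct and follows essentially the same route as the paper: Kim's spectral characterization of $C$, the change of variables $B=F/(1-F)$ (so that $1+B=1/(1-F)$ is the sensitivity function), and the discrete Bode integral identifying $\frac{1}{2\pi}\int\log|1/(1-F)|\,d\omega$ with $\sum_j\log|\beta_j|=I$. The paper routes the middle step through entropy rates and power spectral densities ($|S|^2=S_Y/S_Z$ plus Kolmogorov's formula) rather than substituting directly into Kim's frequency-domain expression, but this is the same computation, and your explicit attention to matching the two optimization domains is if anything more careful than the paper's one-line remark on causality.
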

\begin{remark}
Theorem~\ref{SG} links the two works and is  a simple consequence of~\cite{Elia2004,YH}. The Gaussian channel capacity can be represented as the limit of a sequence of optimization problems~\cite{Cover--Pombra}. Kim~\cite{YH} proved that the limit of this sequence can be replaced by a single optimization problem, maximizing the entropy rate of the output process over all stationary Gaussian input processes, which are a causal and linear filtered version of the noise process. On the other hand, Elia~\cite{Elia2004} expressed the asymptotic average directed information between the input and output process for a stationary linear scheme over a Gaussian channel in terms of the Bode integral. We connect the capacity, the supremum achievable rate of communication, to the supremum tolerable instability over  Gaussian channels. 
\end{remark}
\begin{IEEEproof}
The capacity of the stationary Gaussian channel under power constraint $P$ is given by \cite[Theorem 4.6]{YH} 
\begin{align}
C&=\sup_B \frac{1}{2\pi}\int_{-\pi}^{\pi} \half \log\left(|1+B(e^{j\omega})|^2\right)d\omega \label{Gcapa}
\end{align}
where the maximization is over all causal filters $B(e^{j\omega})$
which satisfy 
\begin{align}\label{pc}
\frac{1}{2\pi}\int_{-\pi}^{\pi} |B(e^{j\omega})|^2 S_Z(e^{j\omega}) d\omega \leq P.
\end{align}
This characterization can also be viewed as follows~\cite{YH},
\begin{align}
C&=\sup_{\{X_i\}} \hb(Y) - \hb(Z)  \label{Gcapa2}
\end{align}
where the supremum is over all stationary Gaussian processes $\{X_i\}$ of the form $X_i=\sum_{k=1}^\infty b_k Z_{i-k}$ such that $\E(X_i^2) \leq P$. 

We show that the capacity expression~\eqref{Gcapa} can be rewritten in terms of the instability which can be tolerated under the power budget. Let the {\em sensitivity function}~\cite{Doyle} be the transfer function from the output to the noise, 
\begin{align}\label{sensitivitydef}
S(z)=\frac{Y(z)}{Z(z)}.
\end{align}
By discrete version of Bode's integral~\cite{DisBode1} we know that if the closed loop is stable then
\begin{align}\label{Bode}
\frac{1}{2\pi}\int_{-\pi}^{\pi} \log(|S(e^{j\omega})|)d\omega = \sum_{j=1}^m \log(|\beta_j|)=I.
\end{align}
Let $S_Y(e^{j\omega})$ and $S_Z(e^{j\omega})$ be the power spectral density of the output and the noise sequence, respectively. From \eqref{sensitivitydef} we have
\begin{align}\label{powerspec}
|S(e^{j\omega})|^2=\frac{S_Y(e^{j\omega})}{S_Z(e^{j\omega})}
\end{align}
and plugging \eqref{powerspec} into \eqref{Bode} we get
\begin{align}\label{Bode2}
I=\frac{1}{2\pi}\int_{-\pi}^{\pi} \half \log\left(\frac{S_Y(e^{j\omega})}{S_Z(e^{j\omega})}\right)d\omega.
\end{align}

The entropy rate of a stationary Gaussian process $\{Y_i\}_{i=1}^\infty$ was shown by Kolmogorov~\cite{Cover--Thomas2006} to be
\begin{align}\label{entrate}
\hb(Y)= \frac{1}{2\pi} \int_{-\pi}^{\pi} \half \log(2\pi e S_Y(e^{j\omega}))d\omega.  
\end{align}
Plugging \eqref{entrate} into \eqref{Gcapa2} and comparing with~\eqref{Bode2} we have
\[C= \sup \ I\]
where the supremum is over all causal $B(z)$ such that \eqref{pc} is satisfied. On the other hand, $B(z)=\frac{X(z)}{Z(z)}$ can be written in terms of $F(z)=\frac{X(z)}{Y(z)}$ as follows,
\[B(z)=\frac{F(z)}{1-F(z)}.\]
Note that $F(z)$ being causal is equivalent to $B(z)$ being causal and vice-versa. Therefore, considering constraint \eqref{pc}, the supremum can be taken over all causal filters $F(z)$ such that
\[\frac{1}{2\pi}\int_{-\pi}^{\pi} \left|\frac{F(e^{j\omega})}{1-F(\jo)}\right|^2 S_Z(e^{j\omega}) d\omega \leq P.\]
\end{IEEEproof}

Theorem~\ref{SG} states that the capacity of the stationary Gaussian channel represents the maximum instability of an LTI system which can be stabilized over the Gaussian channel by a linear controller with power at most $P$. 

Next, we consider some special cases of the noise spectrum. 

\subsection{ARMA(1) Noise}
Let the noise process be first-order autoregressive moving-average ARMA(1), i.e.,
\begin{align}\label{ARMA}
S_Z(e^{j\omega}) = \frac{|1+\alpha e^{j\omega}|}{|1+\b e^{j\omega}|}
\end{align}
where $\alpha \in [-1,1]$ and $\b \in (-1,1)$. 
\begin{cor}\label{ARMAProp}
For the stationary Gaussian channel with noise process ARMA(1) given in~\eqref{ARMA}, and power constraint $P$, we have  
\begin{align}\label{equalcap}
\CE=C= \sup I
\end{align}
where the supremum is over all causal $B(z)$ such that the power constraint~\eqref{pc} is satisfied.
\end{cor}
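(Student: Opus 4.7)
The statement splits into two parts: $C=\sup I$ and $\CE=C$. The first is immediate, since ARMA(1) is a particular stationary Gaussian noise spectrum, so Theorem~\ref{SG} applies verbatim and gives $C=\sup I$ with the optimization over causal $B(z)$ (equivalently $F(z)$) subject to the power constraint~\eqref{pc}. No additional work is needed for this equality.

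The real content of the corollary is showing $\CE=C$. By Theorem~\ref{MSEupp} we already have $\CE\leq C$, so it suffices to establish the reverse inequality, and the natural tool is Corollary~\ref{sufficient}: I must exhibit, for every $R<C$, a sequence of $n$-codes of rate $R$ whose error probability satisfies $\pen/2^{-2nC}\to 0$. The plan is therefore to invoke an explicit capacity-achieving feedback coding scheme tailored to ARMA(1) noise and verify its error exponent.

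Concretely, I would use the linear feedback scheme for stationary Gaussian channels (specializing Kim's optimal Gaussian scheme or, equivalently, Butman's scheme for first-order AR/ARMA noise) evaluated at the optimal causal filter $F^*(z)$ that attains the supremum in Theorem~\ref{SG}. The key structural fact is that, because the noise is ARMA(1), the innovations process has a finite-dimensional state representation, so the closed-loop system is an LTI system of bounded order driven by i.i.d.\ Gaussian innovations. One can then convert a message in $\Mc_n=\{1,\ldots,2^{nR}\}$ into an initial condition on a unit interval (as in Schalkwijk--Kailath), drive the system with the optimal $F^*(z)$, and recover the message by MMSE estimation at time $n$. Standard analysis of the closed-loop estimation error variance gives $\E[(S_0-\Sh_0)^2]\doteq 2^{-2nC}$, and the Gaussianity of the final estimation error together with the minimum-distance $\sim 2^{-nR}$ between codewords yields a $Q$-function bound of the form $\pen\leq Q(c\cdot 2^{n(C-R)})$, which decays \emph{doubly} exponentially in $n$. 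Any doubly exponential decay trivially satisfies $\pen/2^{-2nC}\to 0$ for every $R<C$, and Corollary~\ref{sufficient} then yields $\CE\geq C$.

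The main obstacle is verifying the doubly exponential error decay for the ARMA(1) case. The AWGN instance ($\alpha=\beta=0$) is the classical Schalkwijk--Kailath calculation, but for general $\alpha\in[-1,1]$, $\beta\in(-1,1)$ one has to check that the closed-loop sensitivity induced by the optimal $F^*(z)$ is genuinely stable (all poles inside the unit disk apart from the designed unstable modes of the plant), so that the estimation error at block end is Gaussian with variance decaying exponentially at the Bode rate $2^{-2nC}$. This is where ARMA(1) specifically helps: the optimal $F^*$ can be computed in closed form (as in~\cite{YH}), its low order makes the stability verification routine, and the resulting error recursion is one-dimensional, so the doubly exponential bound follows from a direct induction. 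Once this is in hand, the chain $\CE\leq C\leq\CE$ closes and the corollary is proved.
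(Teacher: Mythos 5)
Your proposal is correct and follows essentially the same route as the paper: $C=\sup I$ from Theorem~\ref{SG}, and $\CE=C$ via the sufficient condition of Corollary~\ref{sufficient} applied to the doubly exponential error decay of the Schalkwijk--Kailath-type scheme for ARMA(1) noise. The only difference is that the paper simply cites Kim's result (\cite[Theorem 5.3]{YH}) for that doubly exponential decay, whereas you sketch a re-derivation of it; the citation suffices and spares you the closed-loop stability verification you flag as the main obstacle.
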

\begin{IEEEproof}
To show the first equality in~\eqref{equalcap}, i.e., $\CE=C$, note that the feedback capacity of the Gaussian channel with ARMA(1) noise is achieved~\cite[Theorem 5.3]{YH} by the Schalkwijk-Kailath scheme with probability of error going to zero doubly exponentially fast. Therefore, the sufficient condition~\eqref{errorexp} provided in Section~\ref{MSErate} is satisfied and hence $\CE=C$. The second equality in~\eqref{equalcap} follows from Theorem~\ref{SG}. 
\end{IEEEproof}

\begin{remark}
Note that the AWGN channel, for which the noise spectrum is white, is a special case of ARMA(1) with $\alpha=\b=0$.
\end{remark}


\subsection{White Noise}

The following corollary considers the special case of AWGN channels.
\begin{cor}\label{exAWGN}
For AWGN channels where the noise is i.i.d.\@ and Gaussian, $Z_i \sim \N(0,1)$, the transfer function $F(z)$
\[F(z)=-(\b^2-1)\frac{\b^{-1}z^{-1}}{1-\b z^{-1}},\]
which has one pole at $\b=\sqrt{1+P} > 1$, achieves the MMSE capacity 
\[\CE(P)=C(P)=\half \log(1+P)\]
under power constraint $P$.
\end{cor}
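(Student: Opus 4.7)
My plan is to apply Corollary~\ref{ARMAProp} with $\alpha=\b=0$, since the AWGN channel is the degenerate ARMA(1) case with spectrum $S_Z(e^{j\omega})\equiv 1$. This immediately yields
\[
\CE(P)=C(P)=\sup \ I
\]
where the supremum is taken over all causal feedback filters $F(z)$ satisfying the power constraint~\eqref{pc2} with $S_Z\equiv 1$. Thus the result reduces to (i) checking that the proposed $F(z)$ is causal with a single unstable pole at $\b=\sqrt{1+P}$, giving instability $I=\log\b=\tfrac12\log(1+P)$, and (ii) verifying that this $F(z)$ meets the power constraint with equality. Since the classical AWGN capacity $C(P)=\tfrac12\log(1+P)$ is a known upper bound on $\sup I$, achieving it on both sides will close the argument.

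For step (i), identifying the unstable pole is immediate from the factor $1-\b z^{-1}$ in the denominator with $\b>1$, so $I=\tfrac12\log(1+P)$. For step (ii), I would first simplify $1-F(z)$ algebraically. A short calculation gives
\[
1-F(z) \;=\; \frac{1-\b^{-1}z^{-1}}{1-\b z^{-1}},
\]
so that
\[
B(z):=\frac{F(z)}{1-F(z)} \;=\; \frac{-(\b^2-1)\b^{-1}z^{-1}}{1-\b^{-1}z^{-1}}.
\]
Because $|\b^{-1}|<1$, the filter $B(z)$ is causal and stable, with impulse response $h_k=-(\b^2-1)\b^{-k}$ for $k\ge 1$. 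By Parseval,
\[
\frac{1}{2\pi}\int_{-\pi}^{\pi}|B(\jo)|^2\,d\omega
=\sum_{k=1}^{\infty} h_k^2
=(\b^2-1)^2\sum_{k=1}^{\infty}\b^{-2k}
=\frac{(\b^2-1)^2}{\b^2-1}=\b^2-1=P,
\]
so the power constraint~\eqref{pc2} holds with equality.

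Combining the two steps, the proposed $F(z)$ is a feasible causal filter achieving $I=\tfrac12\log(1+P)$. Since by Corollary~\ref{ARMAProp} (or equivalently Theorem~\ref{SG} specialized to $S_Z\equiv 1$) the supremum of $I$ over all such filters equals the feedback capacity, and the Shannon capacity of the memoryless AWGN channel is $\tfrac12\log(1+P)$, the supremum is attained by this $F(z)$. This simultaneously establishes $C(P)=\CE(P)=\tfrac12\log(1+P)$ and exhibits the particular filter that achieves it. The only nontrivial calculation is the Parseval step, but it is a clean geometric-series evaluation; no further obstacle is expected.
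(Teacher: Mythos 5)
Your proposal is correct and follows essentially the same route as the paper: invoke Corollary~\ref{ARMAProp} (AWGN as the degenerate ARMA(1) case), read off the instability $I=\log\b=\tfrac12\log(1+P)$ from the single unstable pole, and verify the power constraint by computing $B(z)=F(z)/(1-F(z))=-(\b^2-1)\b^{-1}z^{-1}/(1-\b^{-1}z^{-1})$ and evaluating the resulting geometric series to get exactly $\b^2-1=P$. Your Parseval/impulse-response computation is the time-domain version of the paper's frequency-domain integral (and, incidentally, is written more cleanly — the paper's intermediate sum has a typo, $\sum_k\b^{-k}$ where $\sum_k\b^{-2k}$ is meant).
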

\begin{IEEEproof}
Note that for the AWGN channel, $S_Z(e^{j\omega})=1$, and by Corollary~\ref{ARMAProp} we have
\[\CE(P)=C(P)=\half \log\left(1+P\right).\]
Moreover, for $\b=\sqrt{1+P}$, the transfer function $F(z)$ has instability $I=\log(\b)$, hence 
\[\CE(P)=C(P)=I.\]
It is left to show that $F(z)$ also satisfies the power constraint~\eqref{pc2}, i.e.,
\begin{align}\label{pcalias}
\frac{1}{2\pi}\int_{-\pi}^{\pi} |B(e^{j\omega})|^2 S_Z(e^{j\omega}) d\omega \leq P.
\end{align}
where 
\[B(z)=\frac{F(z)}{1-F(z)}=-(\b^2-1)\frac{\b^{-1}z^{-1}}{1-\b^{-1}z^{-1}}.\]
Note that $B(z)$ is same as the optimal filter given in~\cite[Lemma 4.1]{YH} for the AWGN channel and similarly for $\beta > 1$ we have
\begin{align*}
\frac{1}{2\pi}\int_{-\pi}^{\pi} \left|\frac{1}{1-\b^{-1}e^{-j\omega}}\right|^2 d\omega &=\frac{1}{2\pi} \int_{-\pi}^{\pi} \left|\sum_{k=0}^\infty \b^{-k}e^{-jk\omega} \right|^2 d\omega \\
&= \sum_{k=0}^\infty \beta^{-k} \\
&=\frac{1}{1-\b^{-2}}.
\end{align*}
Hence,
\begin{align}
\frac{1}{2\pi}&\int_{-\pi}^{\pi} \left|\frac{F(e^{j\omega})}{1-F(\jo)}\right|^2 S_Z(e^{j\omega}) d\omega \nn\\
&=(\b^2-1)^2\frac{\b^{-2}}{1-\b^{-2}}\nn\\
&=(\b^2-1) \nn \\
&=P
\end{align}
and $F(z)$ satisfy the power constraint. 
\end{IEEEproof}

Finally, we show that the filter in Corollary~\ref{exAWGN} corresponds to the Schalkwijk and Kailath scheme~\cite{SchKai,Sch}. The transfer function $F(z)=\frac{X(z)}{Y(z)}$ in Corollary~\ref{exAWGN} can be represented in the time domain as follows,

\begin{align}
X_i-\b X_{i-1}&= -\frac{\b^2-1}{\b} Y_{i-1}\nn
\end{align}
or
\begin{align}\label{time}
X_i&=\b\left(X_{i-1} -\frac{\b^2-1}{\b^2} Y_{i-1}\right).
\end{align}
As it was shown earlier, for the transfer function $F(z)$ in Corollary~\ref{exAWGN} we have $\E(X_i^2)=P$. Hence, $\E(Y_i^2)=1+P$ and 
\begin{align}
\frac{\b^2-1}{\b^2} &=\frac{P}{1+P} \nn \\
&= \frac{\E(X_{i-1}Y_{i-1})}{\E(Y^2_{i-1})} \label{LMMSE}.
\end{align}
Therefore, \eqref{time} can be written as
\begin{align}
X_i&=\b\left(X_{i-1} - \frac{\E(X_{i-1}Y_{i-1})}{\E(Y^2_{i-1})}  Y_{i-1}\right), \nn
\end{align}
which is the recursive representation of the Schalkwijk and Kailath encoder~\cite{YH--Lecture}.

\section{Gaussian Multiple Access Channel}\label{MAC}

           \begin{figure*}[htbp]
   \centering
   \begin{picture}(300,280)(0,0)

     \put(0,120){   
    \psfrag{m1}[b]{\hspace{-1em}\small  $ M_1$}
\psfrag{m2}[b]{\hspace{-1em}\small  $M_j$}
\psfrag{m3}[b]{\hspace{-1em}\small  $M_N$}
\psfrag{e}[b]{\hspace{5em} \small  $\hat{M}_1,\ldots,\hat{M}_N$}
\psfrag{en1}{\hspace{1.4em} { \small Encoder $1$}}
\psfrag{f1}{\hspace{-.4em} { \small $X_{1i}=f_{1i}(M_1,Y^{i-1})$}}
\psfrag{enj}{\hspace{1.4em} { \small Encoder $j$}}
\psfrag{fj}{\hspace{-.5em} { \small $X_{ji}=f_{ji}(M_j,Y^{i-1})$}}
\psfrag{enN}{\hspace{1.4em} { \small Encoder $N$}}
\psfrag{fN}{\hspace{-.85em} { \small $X_{Ni}=f_{Ni}(M_N,Y^{i-1})$}}

\psfrag{de}{\hspace{-.8em} { \small Decoder }}
\psfrag{x1}{\hspace{-.5em} $X_{1i}$}
\psfrag{x2}{\hspace{-.5em} $X_{ji}$}
\psfrag{x3}{\hspace{-.7em} $X_{Ni}$}
\psfrag{y}{\hspace{1em}\small  $Y_i$}
\psfrag{z}{\hspace{-2em} { \small $Z_i \sim \N(0,1)$}}
\psfrag{d}{\hspace{.5em} $\vdots$}           
  
           \includegraphics[width=4.2in]{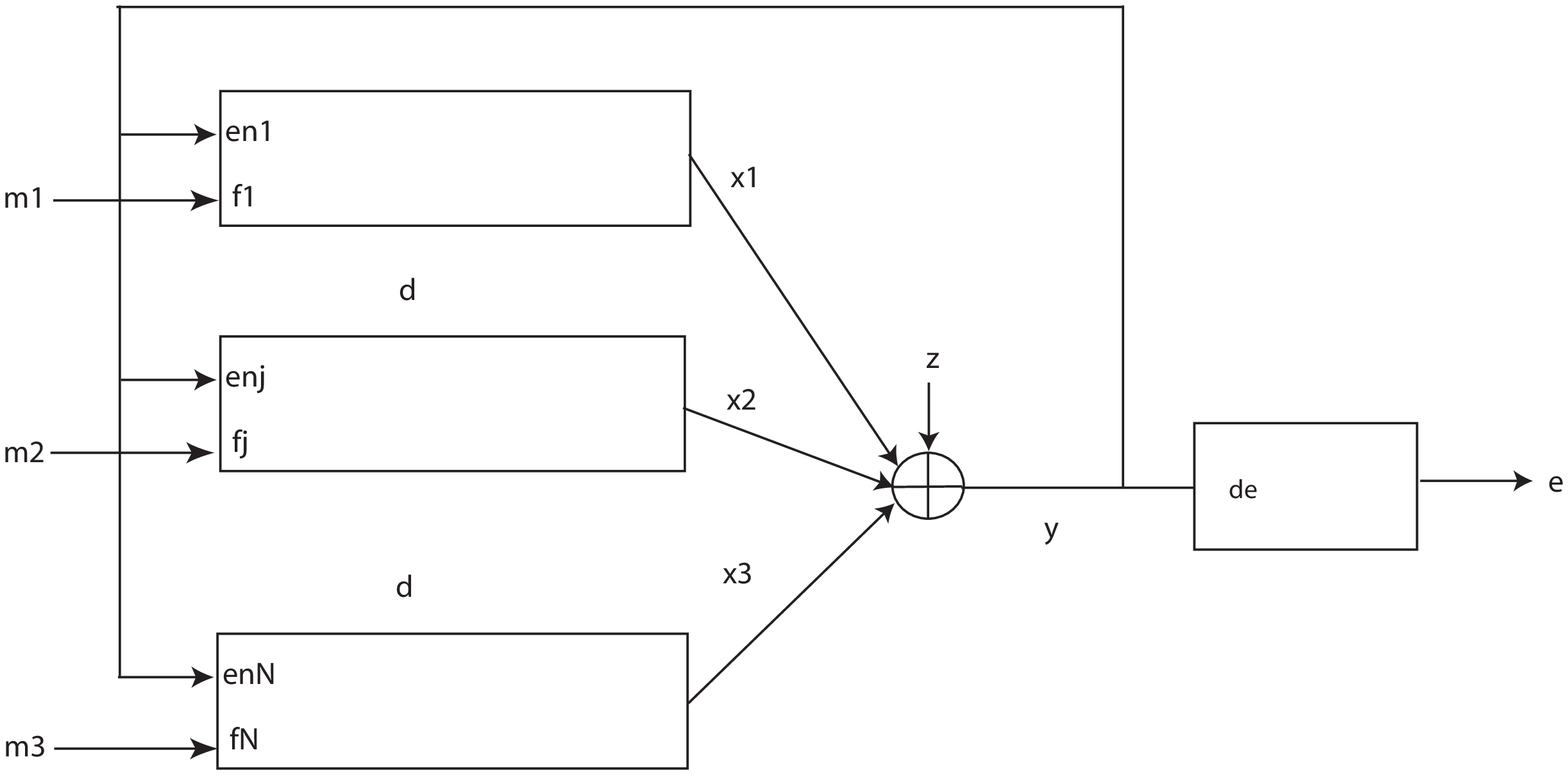}    }       
               
   \put(-100,100){\scriptsize (a) The AWGN-MAC with feedback  }              
                       
     \put(15,0){   
   \psfrag{f}[b]{ \hspace{3.5em}\small $\Sv_{i}=A\Sv_{i-1}+BY_{i-1}$}
\psfrag{s}[b]{ \hspace{3.5em} System}
\psfrag{i}[b]{\hspace{-6em}\small  $\Sv_0=(S_{10},\ldots,S_{N0})$}
\psfrag{y}[b]{\hspace{3em} \small $Y_i$}
\psfrag{u}[b]{\hspace{-2em} \small $X_i$}
\psfrag{z}[b]{\hspace{.3em} \small  $Z_i\sim \N(0,1)$}
\psfrag{q}{\hspace{-.5em} { \small Controller}}
\psfrag{ch}[b]{\hspace{7em} \small Channel}

\psfrag{e}[b]{ \hspace{3.5em} \small  $X_i=\pi_i(\Sv_{i})$}

   \includegraphics[width=2.8in]{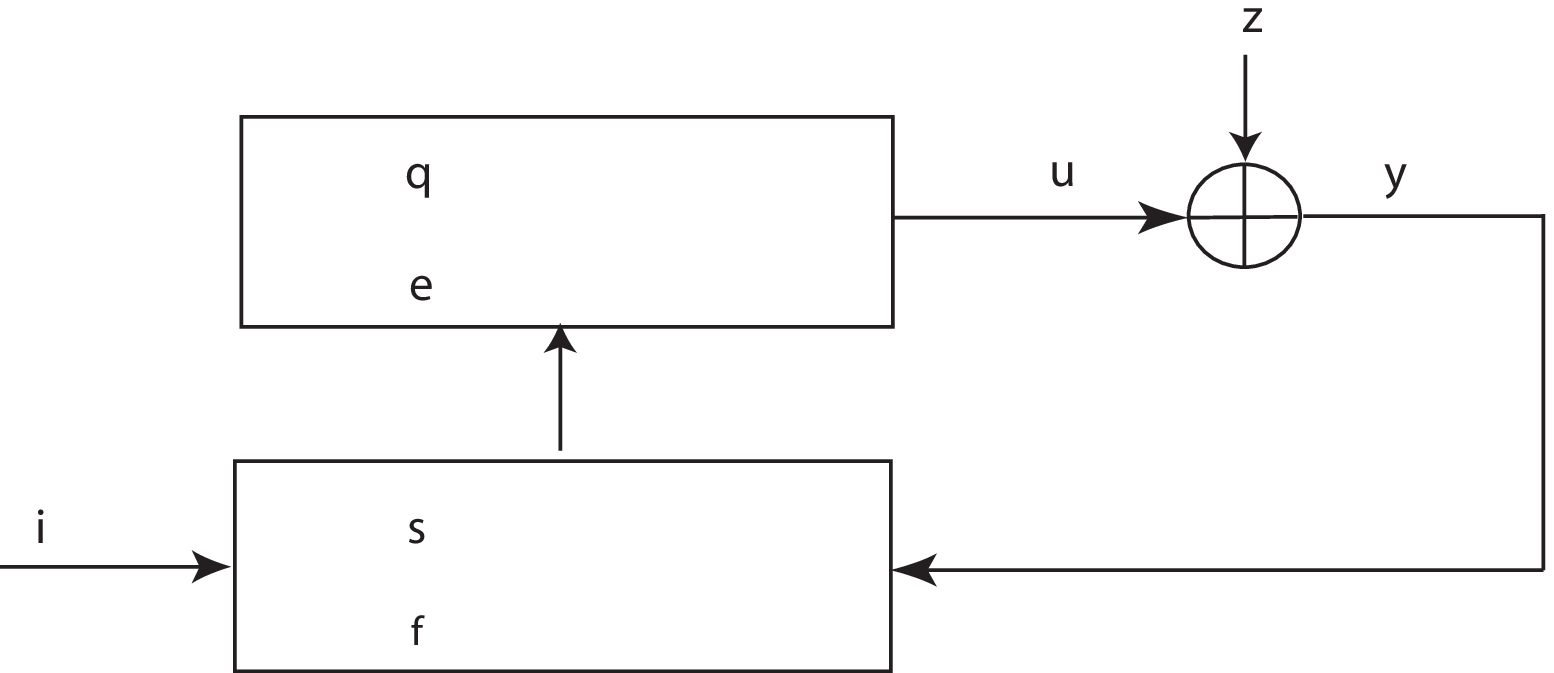} }

\put(-100,-10){\scriptsize (b) Control over AWGN channel  }  
                   
   \end{picture}   
   \vspace{.25in}
   \caption{The set of $N$ encoders for feedback communication over the AWGN-MAC depicted in (a) can be represented as a dynamical system depicted in (b).} 
  \label{fig:GMAC}
\end{figure*}

In this section, we extend the control representation for the communication over point-to-point channels to the feedback communication of $N$ senders and one receiver over the AWGN-MAC depicted in Fig.~\ref{fig:GMAC}a. Each sender $j \in \{1, \ldots,  N\}$ wishes to reliably transmit a message $M_j \in \Mc_j:=(0,1)$ in the unit interval to the receiver. At each time~$i$, the output of the channel is
\begin{align*}
Y_i=\sum_{j=1}^N X_{ji} + Z_i
\end{align*}
where $X_{ji} \in \Real$ is the transmitted symbol by sender~$j$ at time~$i$, $Y_i \in \Real$ is the output of the channel, and $\{Z_i\}$ is a discrete-time zero-mean white Gaussian noise process with unit average power, i.e., $\E(Z^2_i)=1$, independent of $M_1,\ldots, M_N$. We assume that output symbols are causally fed back to the sender and the transmitted symbol $X_{ji}$ for sender~$j$ at time~$i$ can depend on both the message $M_j$ and the previous channel output sequence $Y^{i-1}:= \{\Yi\} $. We define a $n$-code for the AWGN-MAC as
\begin{enumerate}
\item $N$ encoders: each encoder $j$, $j=1,\ldots,N,$ is a set of encoding maps $f_{ji}:  (\Mc_j,\Yc^{i-1}) \to \Xc_j$, $i=1,\ldots,n$, also known to the receiver, such that 
\begin{align}
X_{ji}=f_{ji}(M_j,Y^{i-1}) 
\end{align} 
and 
\item a decoder: a decoding map $\phi: \Yc^n \to \Mc_1 \times \ldots \times \Mc_N$ which determines the estimates of the messages $\Mh_1,\ldots, \Mh_N$ given $Y^n$
\begin{align}
(\Mh_1,\ldots, \Mh_N)=\phi(Y^n).
\end{align}
\end{enumerate}
We assume that the message vector $\Mv:=(M_1,\ldots, M_N) \sim \U\left((0,1)^N\right)$ is uniformly distributed in a $N$ dimensional unit box and the performance measure is the set of mean square errors, 
\[\Dn_j:=\E\left((M_j-\Mh_j(Y^n))^2\right) \quad j=1,\ldots, N.\] 
The set of {\em MSE exponents} $(E_1,\ldots,E_N)$ is called achievable if there exists a sequence of $n$-codes such that for $j=1,\ldots,N,$
\begin{align}
 \liminf_{n \to \infty} -\frac{1}{2n}\log(\Dn_j) &\geq E_j. \nn
 \end{align}
We say that a sequence of $n$-codes has {\em asymptotic powers} $(\Pb_1,\ldots,\Pb_N)$ if  
\begin{align}
  \lim_{n \to \infty} \E(X^2_{jn}) =  \Pb_j, \quad j\in \{1,\ldots,N\}. \nn
 \end{align}

Similar to the point-to-point case, the information-theoretic achievable rates~\cite{Cover--Thomas2006} are defined based on a sequence of massage sets whose size depend on the block length~$n$. Consider the discrete message sets $\Mc_{jn}:=\{1,2,\ldots,2^{nR_j}\}, j=1,\ldots,N$. For a $n$-code with messages $M_{jn} \sim \U(\Mc_n), j=1,\ldots,N$ uniformly distributed over the set $\Mc_{jn}$, let the probability of error be 
\[\pen :=\P\{(M_1,\ldots,M_N)\neq (\Mh_1,\ldots,\Mh_N)\}. \]
The set of rates $(R_1, \ldots, R_N)$ is called achievable  under block power constraints $(P_1,\ldots,P_N)$ if there exists a sequence of $n$-codes with messages $M_j \sim \U(\Mc_{j,n})$, such that for $i \in \{1,\ldots,n\}$, and $j\in \{1,\ldots,N\}$,
\begin{align}\label{avP}
\sum_{i=1}^n \E(X^2_{ji}) \leq nP_j 
\end{align}
and $P_e^{(n)} \to 0$ as $n \to \infty$. We refer to $R=\sum_{j=1}^N R_j$ as the {\em sum rate} of a given code. 

The following lemma presents the connection between the achievable MSE exponents and the  achievable rates.
\begin{lem}\label{lemMSE}
Let MSE exponents $(E_1,\ldots,E_N)$ with asymptotic powers $(\Pb_1,\ldots,\Pb_N)$ be achievable and 
\begin{align*}
R_j< E_j, \ \Pb_j < P_j
\end{align*}
for $j=1,\ldots,N$. Then the rate-tuple $(R_1, \ldots, R_N)$ is also achievable and the block power constraints $(P_1,\ldots,P_N)$ are asymptotically satisfied.  
\end{lem}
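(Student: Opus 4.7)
The plan is to extend the quantization argument of Lemma~\ref{thmMSE} to the multiple access setting, adding a union bound over the $N$ senders for the probability of error while inheriting the transmit powers directly from the original continuous-message scheme.

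Starting from a sequence of $n$-codes that achieves MSE exponents $(E_1,\ldots,E_N)$ with asymptotic powers $(\Pb_1,\ldots,\Pb_N)$ for continuous messages $M_1,\ldots,M_N \iid \U(0,1)$, I would construct a new code for the discrete messages $M_{jn} \sim \U(\Mc_{jn})$ as follows. Each sender~$j$ draws a private randomization $U_j \sim \U(0,1)$, independent of everything else, forms the continuous label $M_j := (M_{jn}-1+U_j)/2^{nR_j}$, and runs the original encoder on $M_j$. The decoder forms the MMSE estimates $\hat{M}_j$ as in the original code and outputs $\hat{M}_{jn} := \lceil 2^{nR_j}\hat{M}_j\rceil$. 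By construction, the $M_j$ are independent and $\U(0,1)$-distributed, so the joint distribution of $(M_1,\ldots,M_N,X^n,Y^n)$ under the new code is identical to that under the original continuous-message code, and in particular the transmit powers transfer unchanged; the asymptotic power hypothesis $\Pb_j < P_j$ then ensures that the block power constraint~\eqref{avP} is asymptotically satisfied.

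To bound the probability of error for sender~$j$, note that $\hat{M}_{jn} \neq M_{jn}$ only if $M_j$ and $\hat{M}_j$ lie in different subintervals of width $2^{-nR_j}$. For any $\delta_n > 0$, this event is contained in the union of the event that $M_j$ lies within $\delta_n 2^{-nR_j}$ of a subinterval boundary and the event $\{|\hat{M}_j - M_j| \geq \delta_n 2^{-nR_j}\}$. The first has probability at most $2\delta_n$ by the uniformity of $M_j$, and the second is bounded by $D^{(n)}_j 2^{2nR_j}/\delta_n^2$ by Markov's inequality. Picking any $E_j' \in (R_j, E_j)$ so that $D^{(n)}_j \leq 2^{-2nE_j'}$ for all $n$ large enough, and balancing with $\delta_n := (D^{(n)}_j 2^{2nR_j})^{1/3} \to 0$, both terms vanish. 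A union bound over $j=1,\ldots,N$ then yields $\pen \to 0$, establishing achievability of $(R_1,\ldots,R_N)$.

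The main obstacle is a bookkeeping subtlety: translating the \emph{asymptotic} (per-symbol) power hypothesis $\lim_{n\to\infty}\E(X_{jn}^2) = \Pb_j$ into the \emph{average} block power that appears in~\eqref{avP}. For the schemes of interest here, such as Kramer's code, the per-symbol powers $\E(X_{ji}^2)$ are essentially constant in $i$ and converge uniformly to $\Pb_j$, so a Ces\`aro-type averaging gives $\frac{1}{n}\sum_{i=1}^n \E(X_{ji}^2) \to \Pb_j < P_j$ and the conclusion follows. Once this is settled, the remaining analysis reduces, sender by sender, to exactly the point-to-point argument of Lemma~\ref{thmMSE}, with only the union bound added for the MAC.
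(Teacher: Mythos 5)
Your proposal is correct and follows the same overall skeleton as the paper's proof: reduce sender by sender to the point-to-point argument of Lemma~\ref{thmMSE}, add a union bound over the $N$ error events, and invoke the Ces\`aro mean to pass from the per-symbol asymptotic powers $\Pb_j<P_j$ to the block power constraint~\eqref{avP}. The one place you genuinely diverge is in how the continuous-to-discrete reduction is carried out. The paper (inside Lemma~\ref{thmMSE}) shows that the MSE guarantee yields an interval decoder with $\P(M\notin\Delta_n)\to 0$ for $|\Delta_n|=2^{-nR}$, and then appeals to Lemma~II.3 of the Shayevitz--Feder paper to map the discrete message set onto well-separated message points in $(0,1)$; this external lemma is needed precisely because placing $2^{nR}$ points at fixed locations does not reproduce the uniform distribution under which the MSE bound was proved. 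Your dithering construction, $M_j=(M_{jn}-1+U_j)2^{-nR_j}$ with private randomness $U_j\sim\U(0,1)$, sidesteps this entirely: the induced $M_j$ is exactly uniform, so the MSE and power statistics transfer verbatim, and the residual boundary-proximity event costs only $2\delta_n$. This is a self-contained and arguably cleaner route, at the price of requiring (common or private) randomization at each encoder. Your error bound and the choice $\delta_n=(D_j^{(n)}2^{2nR_j})^{1/3}$ are correct. On the power bookkeeping, you are right to flag that $\lim_n\E(X_{jn}^2)=\Pb_j$ only controls the block average via Ces\`aro when the $n$-codes are truncations of a single infinite-horizon scheme (so that $\E(X_{ji}^2)$ does not depend on $n$); the paper makes the same implicit assumption, so this is a shared, not a new, gap.
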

\begin{IEEEproof} 
Let the MSE exponents $(E_1,\ldots,E_N)$ be achievable. By a similar argument as in Lemma~\ref{thmMSE} we can see that the rates $(R_1, \ldots, R_N)$ are also achievable if
\begin{align*}
R_j< E_j, \quad j=1,\ldots,N.
\end{align*}
Moreover, if $\Pb_j=\lim_{n \to \infty} \E(X^2_{jn}),$ $j=1,\ldots,N$ 
and $\Pb_j < P_j$, then using the Ces\'{a}ro sum we can see that the constraint 
\begin{align*}
\frac{1}{n}\sum_{i=1}^n \E(X^2_{ji}) \leq P_j , \quad j\in \{1,\ldots,N\}
\end{align*}
is satisfied for sufficiently large $n$. 
\end{IEEEproof}

\subsection{LQG Approach for the AWGN-MAC}
In the following we show that codes can be designed for the AWGN-MAC with feedback, based on the LTI systems controlled over a point-to-point AWGN channel depicted in~Fig.~\ref{fig:GMAC}b. We refer to these codes as LTI codes for the AWGN-MAC.

Let the state $\Sv_i \in \Complex^N$ be a complex vector of length $N$, and the system dynamics be
\begin{align}\label{systemmac}
\Sv_{i}&=A\Sv_{i-1}+BY_{i-1} \quad  i=1,\ldots,n
\end{align} 
where
\begin{align}\label{Aform}
A =  \left( \begin{array}{ccccc}
             \beta_1 \omega_1  &   0 &  0 & \dots & 0 \\
             0 & \beta_2 \omega_2 & 0 & \ldots  & 0 \\
             \vdots & \vdots & \vdots & \ddots & \vdots \\
             0 &  0 & 0 & \ldots  &  \beta_N \omega_{N} \\
        \end{array} \right), \  B=  \left( \begin{array}{c}
             1 \\
             1 \\
             \vdots \\
             1 \\
        \end{array} \right)
  \end{align}
such that $\beta_j > 1$ are real and $|\omega_j|=1$ are distinct points on the unit circle of the complex plane. As it is clear from the system dynamics given in~\eqref{systemmac}, the dependence of the state on the channel outputs is causal .
 
The controller is assumed to control each mode~$j$ separately and transmit a combined control signal (scalar) $X_i$ as follows. It observes the current state $\Sv_{i}$ and picks a vector
\begin{align}\label{actionVec}
\Xv_i:=(X_{1i}, \ldots,X_{Ni})
\end{align}
where 
\begin{align}\label{controlj}
X_{ji}=\pi_{ji}(\Sv_{i}(j)), \quad j=1,\ldots,N
\end{align}
can depend only on $\Sv_{i}(j)$, and then transmits the control signal  
\begin{align}\label{controlsep}
X_i=\sum_{j=1}^N X_{ji}= \sum_{j=1}^N \pi_{ji}(\Sv_{i}(j)).
\end{align} 
For the purpose of analysis we assume complex transmissions over a complex channel, 
\begin{align}\label{channel}
Y_i&=X_i +Z_i, \quad Z_i  \sim \Cc\Nc(0, 1) \ \ \tx{i.i.d.} 
\end{align}
where $X_i,Y_i\in \Complex$ are input and output of the channel, respectively, and  $\{Z_i\}$ is a discrete-time zero-mean white complex Gaussian noise process with identity covariance. Note that one transmission over this complex channel can be viewed as two transmissions over the real channel. Hence, the achievable rates per each complex dimension are also achievable over the real channel.

We assume the complex messages $M_j \in \Complex$, $j=1,\ldots,N,$ in the AWGN-MAC are uniform over $(0,1)\times(0,1)$ and we set the initial state $\Sv_0 \in \Complex^N$ of the system as
\[\Sv_0=\Mv=(M_1,\ldots,M_N).\]
Given the system \eqref{systemmac} and the controller \eqref{controlsep}, we derive the LTI $n$-code for the AWGN-MAC as follows. 
\begin{defn}\label{LTIcode}
An LTI $n$-code for the AWGN-MAC based on LTI system~\eqref{systemmac} and~\eqref{controlsep} consists of
\begin{enumerate}
\item Encoding mappings: The encoder $j$ recursively forms
\begin{align}
\Sv_{i}(j)= \beta_1 \omega_1 \Sv_{i-1}(j)+ Y_{i-1},  \  i=1,\ldots,n \label{encoder}
\end{align}
and at time $i$ transmits $X_{ji}=\pi_{ji}(\Sv_{i}(j))$.
\item Decoding: At the end of the block, the decoder forms the estimate vector $\hat{\Mv}_n:=-A^{-n}\Shv_n$
where 
\begin{align}
\Shv_{i}&=A\Shv_{i-1}+BY_{i-1}, \quad \Shv_0=0, Y_0=0\label{decoder}
\end{align}
and picks $\hat{\Mv}_n(j)$ as the estimate of the message ${M}_j$.  
\end{enumerate}
\end{defn}
Note that to design codes for the MAC it is necessary that the matrix~$A$ is diagonal as in~\eqref{Aform} and that the control signal is of the form~\eqref{controlsep} since the encoders in the MAC are decentralized and do not have access to each other's messages. In the code described above, the dynamics of the $j$-th mode of the system $\Sv_i(j)$ represents the information based on which the encoder~$j$ picks the transmitted signal $X_{ji}$ at time $i$. 
We say that the controller stabilizes the system if 
\[\limsup_{n \to \infty} \E(|\Sv_n(j)|^2) < \infty, \quad j=1,\ldots,N.\]

\begin{lem}\label{1-1}
If the controller of the form~\eqref{controlsep} stabilizes the linear system in~\eqref{systemmac},  then the corresponding sequence of $n$-codes for the AWGN-MAC with feedback described in~\eqref{encoder} and~\eqref{decoder} achieves MSE exponents 
\[E_j= \log(\beta_j), \quad j=1,\ldots,N.\]
\end{lem}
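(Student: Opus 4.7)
The plan is to exploit the fact that the encoder recursion~\eqref{encoder} and the decoder recursion~\eqref{decoder} share the same state matrix $A$ and the same forcing term $BY_{i-1}$, differing only in their initial conditions $\Sv_0=\Mv$ and $\Shv_0=\mathbf{0}$. A one-line induction on $i$ therefore yields the autonomous error dynamics $\Sv_i-\Shv_i=A^i\Mv$, so at time $n$,
\[
\Mv \;=\; A^{-n}(\Sv_n-\Shv_n) \;=\; A^{-n}\Sv_n \;-\; A^{-n}\Shv_n.
\]
Substituting the decoder rule $\hat{\Mv}_n=-A^{-n}\Shv_n$ immediately gives the key identity
\[
\Mv - \hat{\Mv}_n \;=\; A^{-n}\Sv_n.
\]

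Next I would read this identity componentwise. Since $A=\diag(\beta_1\omega_1,\ldots,\beta_N\omega_N)$ with $|\omega_j|=1$, the $j$-th coordinate of the error equation becomes
\[
M_j - \hat{\Mv}_n(j) \;=\; (\beta_j\omega_j)^{-n}\Sv_n(j), \qquad \big|M_j - \hat{\Mv}_n(j)\big|^2 \;=\; \beta_j^{-2n}\,|\Sv_n(j)|^2.
\]
Taking expectations and invoking the stabilization hypothesis $\limsup_{n\to\infty}\E(|\Sv_n(j)|^2)<\infty$, there is a finite constant $K_j$ such that $\Dn_j\leq K_j\,\beta_j^{-2n}$ for all sufficiently large $n$. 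Consequently,
\[
-\frac{1}{2n}\log \Dn_j \;\geq\; \log\beta_j \;-\; \frac{1}{2n}\log K_j \;\xrightarrow[n\to\infty]{}\; \log\beta_j,
\]
so $\liminf_{n\to\infty}-\tfrac{1}{2n}\log \Dn_j \geq \log\beta_j$, which is exactly the statement that the MSE exponent $E_j=\log\beta_j$ is achieved.

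I do not anticipate a real obstacle: the whole argument is two lines of linear-system manipulation followed by one line of taking logarithms. The only mild cautions are that the states and messages are complex while $A$ has complex eigenvalues $\beta_j\omega_j$ (but only the modulus $|\beta_j\omega_j|=\beta_j$ enters the MSE), and that the diagonal structure of $A$ is precisely what decouples the $N$ MSEs so that mode~$j$'s bounded second moment alone controls $\Dn_j$. Conceptually, the decoder is nothing but an observer for the LTI system, so the estimation error is driven purely by $A^n\Mv$; inverting by $A^{-n}$ converts the unstable growth $\beta_j^n$ into the exponential decay $\beta_j^{-n}$ of the MSE, which is the control-theoretic content underlying the lemma and the basis for the subsequent LQG optimization over all stabilizing controllers.
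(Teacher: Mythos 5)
Your proposal is correct and follows essentially the same route as the paper's Appendix proof: unroll the recursion to get $\Mv-\hat{\Mv}_n=A^{-n}\Sv_n$, read it componentwise using the diagonal structure of $A$ so that $\Dn_j=\beta_j^{-2n}\,\E(|\Sv_n(j)|^2)$, and invoke the stabilization hypothesis to conclude $\liminf_{n\to\infty}-\tfrac{1}{2n}\log\Dn_j\geq\log\beta_j$. The only cosmetic difference is that the paper phrases the bounded second moment via the covariance matrix $K_n=\Cov(\Sv_n)$, whereas you work directly with $\E(|\Sv_n(j)|^2)$, which matches the paper's stated definition of stabilization.
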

\IEEEproof See Appendix~\ref{applem1-1}.

\begin{remark}
Note that the set $\{\beta_j\}$ depends only on the system\footnote{From now on,  the word {\em system} refers to the the one given in~\eqref{systemmac}.} and not on the controller. This means that any stabilizing controller for the system~\eqref{systemmac} can be used to design a code for the AWGN-MAC, which achieves the same set of MSE exponents $(\log(\b_1), \ldots, \log(\b_N))$ or, according to Lemma~\ref{thmMSE}, the same set of rates. 
\end{remark}

In the following, we assume that the system is fixed as in~\eqref{systemmac} and we find the stationary controller which minimizes the asymptotic power 
\begin{align}\label{sumpower}
\Pb:=& \lim_{n\to \infty} \frac{1}{n} \sum_{i=1}^n \E(|X_i|^2).
\end{align}
As we show below, the code for the AWGN-MAC which is based on this optimal controller is also optimal in terms of sum rate among the linear codes for the AWGN-MAC under equal power constraints. 

For a controller of the form~\eqref{controlsep}, let $K_{\Xv_i}:=\cov(\Xv_i)$ be the covariance of $\Xv_i$ given in~\eqref{actionVec}. We say that $\Kb_{\Xv}$ is  the {\em asymptotic covariance} if
\[\Kb_{\Xv}:= \lim_{n\to \infty} K_{\Xv_n}.\]
Note that by~\eqref{controlsep}, the asymptotic power of the controller can be written as $\Pb=\sum_{j,k=1}^N (\Kb_{\Xv})_{jk}$. Hence, $\Pb$ represents the asymptotic combined power of all the senders in the corresponding code for the AWGN-MAC, which also captures the correlation between the transmitted signals. Whereas, the asymptotic power of each sender in the AWGN-MAC is determined by $\Pb_j=(\Kb_{\Xv})_{jj}$, $j=1,\ldots,N$. 

\begin{lem}\label{lemLQG}
The optimal controller of the form~\eqref{controlsep} which minimizes the cost~\eqref{sumpower} is stationary and linear, i.e., 
\[X_i=-C\Sv_i,\]
where
\begin{align}\label{optC}
C= (B'GB+1)^{-1}B'GA
\end{align}
and $G$ is the solution to the following discrete algebraic Riccati equation (DARE)
\begin{align}\label{riccati}
G=A'GA-A'GB(B'GB+1)^{-1}B'GA.
\end{align}
\end{lem}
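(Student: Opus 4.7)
The strategy is to recognize the problem as an infinite-horizon average-cost LQG control problem with full state information, solve the relaxed centralized version via dynamic programming and the algebraic Riccati equation, then verify that the resulting linear feedback automatically respects the decentralized structure~\eqref{controlsep}.

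First I would enlarge the admissible class from~\eqref{controlsep} to arbitrary causal maps $X_i=\pi_i(\Sv_i)$ of the full state; this relaxation can only lower the infimum of~\eqref{sumpower}. Because a controller that fails to stabilize~\eqref{systemmac} makes $\E(|\Sv_n(j)|^{2})$ diverge and so, by Lemma~\ref{1-1}, destroys the MSE exponents attributed to this code, the meaningful optimization is over stabilizing policies. Stabilizability of $(A,B)$ is immediate: $A$ is diagonal with distinct eigenvalues and $B$ has all nonzero entries, so the PBH rank test is satisfied at every eigenvalue of $A$.

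Next I would carry out a standard average-cost dynamic-programming verification. Posit a quadratic relative value function $V(s)=s'Gs$ with $G$ Hermitian positive semidefinite and average cost $\rho$, and write the Bellman equation
\begin{align*}
V(s)+\rho=\min_{x}\Bigl\{|x|^{2}+\E\bigl(V(As+B(x+Z))\bigr)\Bigr\}.
\end{align*}
Using $\E(Z)=0$ and $\E(|Z|^{2})=1$, expanding and minimizing over $x$ yields the optimizer $x=-(B'GB+1)^{-1}B'GA\,s=-Cs$; matching the quadratic term in $s$ forces $G$ to solve exactly the DARE in the statement, and the constant term gives $\rho=B'GB$. Existence and uniqueness of a stabilizing positive semidefinite $G$ follow, since $(A,B)$ is stabilizable and $A$ has no eigenvalues on the unit circle, from the classical symplectic-pencil characterization. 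Finally, writing $C=(C_{1},\ldots,C_{N})$ and setting $X_{ji}:=-C_{j}\Sv_{i}(j)$, we get $X_i=-C\Sv_i=\sum_{j}X_{ji}$ with each $X_{ji}$ depending only on $\Sv_i(j)$, so the centralized optimum already lies in the decentralized class~\eqref{controlsep}; hence the two infima coincide and the stationary linear law is optimal.

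The main obstacle is the $Q=0$ structure: the textbook LQG optimality theorem requires detectability of $(A,Q^{1/2})$, which fails trivially here, so one has to argue by hand both the existence of the stabilizing Riccati solution and the fact that no stochastic, time-varying, or nonlinear stabilizing policy beats the stationary linear one. The latter comes from iterating the Bellman inequality forward on the stabilized closed loop so that the terminal quadratic vanishes in the limit, and randomization offers no improvement because $u\mapsto|u|^{2}$ is strictly convex.
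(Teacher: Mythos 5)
Your proposal follows the same route as the paper's proof: relax the separated class~\eqref{controlsep} to arbitrary full-state controllers $X_i=\pi_i(\Sv_i)$, solve the resulting centralized problem, and observe that the optimal linear law $X_i=-C\Sv_i=-\sum_j c_j\Sv_i(j)$ already lies in the separated class, so the relaxation is tight. The difference is in the middle step: the paper disposes of it by citing standard LQG theory, whereas you carry out the average-cost Bellman verification explicitly, and in doing so you flag a point the paper's citation silently skips. The cost~\eqref{sumpower} has zero state penalty, so the textbook existence/uniqueness result for the stabilizing DARE solution (which assumes detectability of $(A,Q^{1/2})$) does not apply off the shelf; indeed $G=0$ also solves~\eqref{riccati} and corresponds to the zero-cost, non-stabilizing policy $X_i\equiv 0$. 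Your observations that the optimization is only meaningful over stabilizing policies (consistent with Lemma~\ref{1-1} and Theorem~\ref{thmlin}, which require $A-BC$ stable), that the stabilizing $G\succ 0$ exists and is unique because $A$ is antistable and $(A,B)$ is reachable, and that the lower bound for arbitrary stabilizing (possibly randomized, nonlinear) policies follows by iterating the Bellman inequality with vanishing terminal term, together make the argument complete where the paper's is merely asserted. So: same strategy, but your version is the more careful one.
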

\begin{IEEEproof}
Note that the optimal power $\Pb^*$ for the controllers of the form~\eqref{controlsep} can be lower bounded as 
\[\Pb^* \geq \Pb_{\min}\]
where $\Pb_{\min}$ is the optimal cost considering a more general controller of the form 
\begin{align}\label{generalcont}
X_{i}=\pi_{i}(\Sv_i)
\end{align}
which picks the scalar action $X_i$ based on the complete state $\Sv_i$ and is not necessarily separated for different modes as in~\eqref{controlsep}. Considering the general control~\eqref{generalcont} we have a linear Gaussian quadratic control problem. From the theory of LQG~\cite{LinEst} we know that the optimal control is linear, i.e., 
\begin{align}\label{solLQG}
X_i=-C_i\Sv_i, \quad C_i=[c_{1i} c_{2i}  \ldots  c_{Ni}].  
\end{align}
and the stationary linear control $C=\lim_{i \to \infty} C_i$, which minimizes the asymptotic cost~\eqref{sumpower}, is given by $C$ in~\eqref{optC}.
Since the solution to this LQG problem given in~\eqref{solLQG} is of the form~\eqref{controlsep} we  conclude that
\[\Pb^* = \Pb_{\min}\] 
and the optimal controller of the form~\eqref{controlsep} is same as the optimal controller of the LQG problem. 
\end{IEEEproof}

By Lemma~\ref{lemLQG}, the optimal controller of the form~\eqref{controlsep} is linear. The following theorem provides rate and power analysis for the AWGN-MAC code based on a general stationary linear control of the form
\begin{align}\label{lincontrol}
X_i=-C\Sv_i, \quad C=[c_{1} c_{2}  \ldots  c_{N}].  
\end{align}
\begin{thm}\label{thmlin}
Consider the stationary linear controller~\eqref{lincontrol} for the system~\eqref{systemmac}. If $A-BC$ is stable, i.e., the eigenvalues of $A-BC $ are inside the unit circle, then the corresponding code for AWGN-MAC designed based on this linear control achieves the rates
\[R_j < E_j=\log(\beta_j), \quad j=1,\ldots,N\] 
with asymptotic powers 
\begin{align}\label{powerj}
\Pb_j=c_j^2\Kb_{jj}, \quad j=1,\ldots,N
\end{align}
where $\Kb$ is the unique solution of the following discrete algebraic Lyapunov equation (DALE)
\begin{align}\label{Lyapunov}
\Kb =  (A-BC)\Kb(A-BC)'+Q,  \quad Q=BB'. 
\end{align}
\end{thm}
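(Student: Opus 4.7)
The plan is to combine Lemma~\ref{1-1} and Lemma~\ref{lemMSE} with a routine mean-square stability analysis of the closed-loop system, reading the rates off stabilization and the per-user powers off the steady-state state covariance.

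First I would write the closed-loop dynamics by substituting $Y_{i-1}=X_{i-1}+Z_{i-1}$ and $X_{i-1}=-C\Sv_{i-1}$ into~\eqref{systemmac}, yielding the autonomous recursion
\[
\Sv_i=(A-BC)\Sv_{i-1}+BZ_{i-1},\qquad \Sv_0=\Mv.
\]
Iterating gives $\Sv_i=(A-BC)^i\Sv_0+\sum_{k=0}^{i-1}(A-BC)^{i-1-k}BZ_k$, in which $\Sv_0$ has bounded second moment (the message is uniform on a bounded set) and is independent of the i.i.d.\ sequence $\{Z_k\}$. Because $A-BC$ has spectral radius strictly less than one, the transient $(A-BC)^i\Sv_0$ vanishes in mean square and the state covariance $K_{\Sv_i}$ converges to the unique Hermitian positive semidefinite solution $\Kb$ of the DALE~\eqref{Lyapunov}. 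This is essentially the only step with content, and it is the textbook discrete-time Lyapunov argument for a stable LTI system driven by white noise.

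Once the limit $\lim_n K_{\Sv_n}=\Kb$ is established, the mean-square stability required by Lemma~\ref{1-1} is immediate, since $\limsup_n \E(|\Sv_n(j)|^2)=\Kb_{jj}<\infty$ for every $j$. Applying Lemma~\ref{1-1} therefore gives achievable MSE exponents $E_j=\log(\beta_j)$ for the corresponding LTI code of Definition~\ref{LTIcode}, and a subsequent application of Lemma~\ref{lemMSE} promotes these MSE exponents to achievability of any rate tuple with $R_j<\log(\beta_j)$.

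For the asymptotic powers, the separable structure $X_{ji}=-c_j\Sv_i(j)$ of the linear controller gives $\E(|X_{jn}|^2)=c_j^2\,\E(|\Sv_n(j)|^2)$, which passes to the limit to yield $\Pb_j=c_j^2\Kb_{jj}$, as claimed. I do not anticipate any real obstacle: the computation is essentially the same steady-state Lyapunov calculation that already underlies the LQG derivation in Lemma~\ref{lemLQG}, and no new information-theoretic machinery beyond Lemmas~\ref{1-1} and~\ref{lemMSE} is needed.
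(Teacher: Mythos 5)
Your proposal is correct and follows essentially the same route as the paper's proof: the closed-loop recursion $\Sv_i=(A-BC)\Sv_{i-1}+BZ_{i-1}$, the Lyapunov recursion for $K_{\Sv_i}$ converging to the unique solution $\Kb$ of the DALE, Lemma~\ref{1-1} plus Lemma~\ref{lemMSE} for the rates, and $\Pb_j=c_j^2\Kb_{jj}$ for the powers. The only (harmless) difference is ordering—you establish covariance convergence first and use it to verify the stabilization hypothesis of Lemma~\ref{1-1}, whereas the paper invokes that lemma up front and does the Lyapunov/Ces\`{a}ro computation afterwards.
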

\begin{IEEEproof}
First note that by Lemma~\ref{1-1}, MSE exponents $E_j=\log(\b_j)$, $j=1,\ldots,N,$ are achievable and hence, by Lemma~\ref{lemMSE}, any rate $R_j < E_j$, $j=1,\ldots,N$ is achievable in each complex dimension.

To find the asymptotic powers, note that given a stationary linear controller of the form~\eqref{lincontrol}, the closed loop system given~\eqref{systemmac} can be written as  
\begin{align}\label{linclosedloop}
\Sv_{i}&=(A-BC)\Sv_{i-1}+BZ_{i-1}.
\end{align}
Consider the code corresponding to the linear control~\eqref{lincontrol}, where the encoder $j$ transmits
\[X_{ji}=c_j\Sv_i(j)\]
at time~$i$ and let $K_i:=\mbox{Cov}(\Sv_i)$. Then, the asymptotic power $\Pb_j$ is
\begin{align}
\Pb_j&=\lim_{n \to \infty} \frac{1}{n} \sum_{i=1}^n \E(X^2_{ji}) =\lim_{n \to \infty}  \frac{1}{n} \sum_{i=1}^n   |c_j|^2 (K_i)_{jj}. \label{asympower}
\end{align}
From the closed loop dynamic given in~\eqref{linclosedloop} we have the following Lyapunov recursion,\begin{align*}
 K_{n} &=  (A-BC)K_{n-1}(A-BC)' +Q,   \quad  Q=BB'. 
\end{align*} 
If $A-BC$ is stable, we know~\cite{LinEst} that
 \begin{align}\label{limcov}
 \lim_{n \to \infty} K_n = \Kb \succ 0,
 \end{align} 
 where $\Kb$ is the unique positive-definite solution of the corresponding discrete algebraic Lyapunov equation (DALE)
 \[\Kb =  (A-BC)\Kb(A-BC)' +Q.\]
Therefore, by~\eqref{limcov} and the Ces\'{a}ro mean theorem, the asymptotic power in~\eqref{asympower} becomes
$\Pb_j= |c_j|^2 \Kb_{jj}.$
\end{IEEEproof}

\subsection{Kramer Code vs. Optimal LTI Code}
Now consider the symmetric system 
\begin{align}
\omega_j&=e^{\frac{2\pi \sqrt{-1}}{N}(j-1)} \nn\\
\beta_j&=\beta \quad j=1,\ldots,N. \label{symmetricA}
 \end{align}
and let the {\em optimal LTI code} for the AWGN-MAC be the code based on the optimal control given in Lemma~\ref{lemLQG}. The following lemma characterizes the matrix $G$ for the symmetric choice of matrix~$A$ given in~\eqref{symmetricA}. 
\begin{lem}\label{symlem}
Let $\b_j=\b$ and $\omega_j= e^{2\pi \sqrt{-1} \frac{(j-1)}{N}}$. The unique positive-definite solution $G \succ 0 $ to the DARE~\eqref{riccati} is circulant with real eigenvalues satisfying $\lambda_i=\frac{1}{\b^2} \lambda_{i-1}$ for $i=2,\ldots, N$. The largest eigenvalue $\l_1$ satisfies
\begin{align}
 1+N\lambda_1 &= \b^{2N}  \label{symsum} \\
\Big(1+\lambda_1\big(N-\frac{\lambda_1}{G_{11}}\big)\Big) &= \b^{2(N-1)}. \label{symother}
\end{align}
\end{lem}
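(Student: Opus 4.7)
The strategy is to reduce the DARE to a scalar recursion via the discrete Fourier transform. With $\omega=e^{2\pi\sqrt{-1}/N}$ and $\Omega=\diag(1,\omega,\ldots,\omega^{N-1})$, I have $A=\beta\Omega$. Let $F$ denote the DFT matrix $F_{jk}=\tfrac{1}{\sqrt{N}}\omega^{(j-1)(k-1)}$; a direct check gives $\Omega F=FP$, where $P$ is the cyclic shift with $Pe_j=e_{j+1\bmod N}$, so
\[ F'AF=\beta P, \qquad F'B=\sqrt{N}\,e_{1}. \]
Since $F$ is unitary, the DARE~\eqref{riccati} is preserved under the change of variable $\tilde G:=F'GF$, becoming
\[ \tilde G=\beta^{2}P'\tilde G P-\frac{\beta^{2}N\,P'\tilde G e_{1}e_{1}'\tilde G P}{N\tilde G_{11}+1}. \]

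The key step is the ansatz $\tilde G=\diag(\lambda_{1},\ldots,\lambda_{N})$, which is equivalent to $G$ being circulant. A short direct calculation gives $P'\tilde G P=\diag(\lambda_{2},\ldots,\lambda_{N},\lambda_{1})$ and $P'\tilde G e_{1}=\lambda_{1}e_{N}$, so the rank-one correction equals $\lambda_{1}^{2}e_{N}e_{N}'$ and is itself diagonal. Reading off diagonal entries reduces the DARE to the scalar recursion $\lambda_{i+1}=\lambda_{i}/\beta^{2}$ for $i=1,\ldots,N-1$ together with the boundary relation $\lambda_{N}=\beta^{2}\lambda_{1}/(N\lambda_{1}+1)$. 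Substituting $\lambda_{N}=\lambda_{1}\beta^{-2(N-1)}$ immediately yields $N\lambda_{1}+1=\beta^{2N}$, which is \eqref{symsum}. Because $\beta>1$ forces $\lambda_{1}=(\beta^{2N}-1)/N>0$ and the recursion keeps the remaining $\lambda_{i}$'s positive, $\tilde G\succ 0$ and hence $G\succ 0$; by uniqueness of the positive definite DARE solution (standard under stabilizability of $(A,B)$), this circulant $G$ is the solution. Its eigenvalues form a decreasing geometric progression with ratio $1/\beta^{2}$, and $\lambda_{1}$ is the largest.

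For \eqref{symother}, I compute $G_{11}$ from $G=F\tilde G F'$. Since $|F_{1k}|^{2}=1/N$ for every $k$,
\[ G_{11}=\frac{1}{N}\sum_{k=1}^{N}\lambda_{k}=\frac{\lambda_{1}}{N}\cdot\frac{\beta^{2N}-1}{\beta^{2(N-1)}(\beta^{2}-1)}=\frac{\lambda_{1}^{2}}{\beta^{2(N-1)}(\beta^{2}-1)}, \]
where the last step uses $N\lambda_{1}=\beta^{2N}-1$. Therefore $\lambda_{1}/G_{11}=\beta^{2(N-1)}(\beta^{2}-1)/\lambda_{1}$, and
\[ \lambda_{1}\Bigl(N-\tfrac{\lambda_{1}}{G_{11}}\Bigr)=N\lambda_{1}-\beta^{2N}+\beta^{2(N-1)}=\beta^{2(N-1)}-1, \]
which rearranges to \eqref{symother}. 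The only conceptual hurdle is justifying the diagonal ansatz for $\tilde G$; I handle this by exhibiting a positive definite solution of the claimed form and invoking uniqueness for the DARE, rather than by a symmetry/averaging argument.
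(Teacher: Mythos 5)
Your proof is correct and follows essentially the same route as the paper's: conjugating the DARE by the DFT matrix turns $A$ into $\beta$ times a cyclic shift and $B$ into $\sqrt{N}\,e_1$, and the diagonal (i.e.\ circulant) ansatz reduces the equation to the geometric recursion $\lambda_{i+1}=\lambda_i/\beta^2$ together with the boundary relation $\lambda_N=\beta^2\lambda_1/(1+N\lambda_1)$, which gives \eqref{symsum}. The only minor divergence is in \eqref{symother}, where you compute $G_{11}=\tfrac{1}{N}\sum_k\lambda_k$ by summing the geometric series, while the paper instead reads off the diagonal entries of the untransformed DARE via the row sums $\sigma_j=\sum_k G_{jk}=\lambda_1$; both are short equivalent computations.
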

\IEEEproof See Appendix~\ref{appsymlem}.

The following theorem shows that under equal power constraints for all the senders, the optimal LTI code for the system~\eqref{symmetricA} corresponds to the linear code proposed by Kramer~\cite{KramerFeedback}, which is known to be sum rate optimal within the class of generalized feedback codes~\cite{Ehsan}. 
\begin{thm}\label{Fourier}
Let $R(P)$ be the sum rate achieved by Kramer's code with equal power $P$. There exists a $\beta > 1$ such that the optimal LTI code based on the system~\eqref{symmetricA} achieves the sum rate $R(P)$ and satisfies the equal power constraint~$P$ asymptotically.
\end{thm}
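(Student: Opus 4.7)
The plan is to exploit the cyclic symmetry of the system~\eqref{symmetricA} to solve the LQG problem explicitly, confirm that the optimal LTI controller induces equal asymptotic per-user powers, and then match the resulting sum rate with Kramer's by a sandwich argument.

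First, by Lemma~\ref{symlem} the Riccati solution $G$ is circulant, so the all-ones vector $B=\mathbf{1}$ is an eigenvector of $G$ with eigenvalue $\lambda_1$. This gives $B'GB=N\lambda_1$ and $B'GA=\lambda_1\beta(\omega_1,\ldots,\omega_N)$. Substituting into~\eqref{optC} and using~\eqref{symsum} to write $1+N\lambda_1=\beta^{2N}$ yields
\[
c_j=\frac{\lambda_1\omega_j}{\beta^{2N-1}},\qquad |c_j|^2=\frac{\lambda_1^{2}}{\beta^{4N-2}},
\]
the same value for every~$j$. The standard LQR identity then gives the optimal average cost $\lim_{n\to\infty}\E|X_n|^2=C\Kb C'=\mathrm{tr}(GBB')=B'GB=\beta^{2N}-1$, so the asymptotic \emph{received} power at the decoder is $\beta^{2N}-1$ and the sum rate supplied by Lemma~\ref{1-1} is $N\log\beta=\tfrac12\log(1+C\Kb C')$.

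Next I would show that the DALE solution $\Kb$ in~\eqref{Lyapunov} has equal diagonal entries. The data $(A,B)$ and the cost are invariant under the cyclic action on indices $j\mapsto j+1\pmod N$ that permutes the roots $\{\omega_j\}$, combined with a global phase absorbed into the white noise $Z$. Since this action also fixes~$B=\mathbf{1}$ and acts on $C$ consistently with $A$, the unique positive-definite fixed point of the Lyapunov recursion must commute with this action, so $\Kb_{jj}=\Kb_{11}$ for all~$j$. By Theorem~\ref{thmlin} the per-user powers $\Pb_j=|c_j|^2\Kb_{jj}$ therefore share a common value $\bar P(\beta)$. Monotonicity of the Riccati and Lyapunov solutions in $\beta$ shows that $\bar P(\beta)$ is continuous and strictly increasing on $(1,\infty)$ with range $(0,\infty)$; consequently, for any $P>0$ there is a unique $\beta=\beta(P)>1$ with $\bar P(\beta)=P$, and the LTI code based on this $\beta$ satisfies the equal power constraint $P$ asymptotically while achieving sum rate $N\log\beta(P)$ by Lemma~\ref{lemMSE}.

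It remains to identify $N\log\beta(P)$ with $R(P)$. The cleanest route is a sandwich: since the LTI code is a generalized linear feedback code, the optimality result of~\cite{Ehsan} gives $N\log\beta(P)\le R(P)$; for the reverse inequality I would recast Kramer's symmetric code as an LTI code on the system~\eqref{symmetricA} for some $\beta'$, so that it lies in the subclass over which the LQG-derived controller is optimal and hence $R(P)\le N\log\beta(P)$. The main obstacle is this embedding: Kramer's recursive encoder must be shown to arise from the modal recursion~\eqref{encoder} (with $\beta_j=\beta$ and the $N$-th roots of unity as modal frequencies), after which the monotonicity from the previous step forces $\beta'=\beta(P)$ and closes the loop. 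For this embedding, Elia's construction in~\cite{Elia2004} for $N=2$ serves as a template; the generalization uses the diagonal structure of~\eqref{Aform} to decouple Kramer's per-user encoders across the $N$ modes indexed by $\omega_j$.
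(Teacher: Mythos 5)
Your setup is sound and partially overlaps with the paper's: you correctly use Lemma~\ref{symlem} to get $B'GB=N\lambda_1=\beta^{2N}-1$, the equal gains $|c_j|^2$, and (by symmetry) equal diagonal entries of $\Kb$, hence equal per-user powers. But the heart of the theorem --- identifying the power--sum-rate tradeoff of the optimal LTI code with Kramer's --- is where your argument has a genuine gap. Your sandwich hinges on ``recasting Kramer's symmetric code as an LTI code on the system~\eqref{symmetricA} for some $\beta'$,'' which you explicitly defer as ``the main obstacle.'' That embedding is not a routine generalization of Elia's $N=2$ construction; it is essentially equivalent to the theorem itself, so the reverse inequality $R(P)\le N\log\beta(P)$ is assumed rather than proved. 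In addition, the claim that $\bar P(\beta)$ is continuous and strictly increasing with range $(0,\infty)$ is asserted via ``monotonicity of the Riccati solution in $\beta$,'' which is not a standard fact (DARE solutions are monotone in the cost weights, not in $A$), so even the existence of $\beta(P)$ is not secured by your argument.

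The paper closes this gap by a direct algebraic identification rather than a sandwich. It first shows $\Pb_j=G_{jj}$: rewriting the DARE~\eqref{riccati} as the Lyapunov equation $G=(A-BC)'G(A-BC)+C'C$ and comparing with~\eqref{Lyapunov} gives $G_{jj}=|c_j|^2\Kb_{jj}=\Pb_j$. Then, combining the two relations of Lemma~\ref{symlem} into
\[
\bigl(1+N\lambda_1\bigr)^{N-1}=\Bigl(1+\lambda_1\bigl(N-\tfrac{\lambda_1}{G_{11}}\bigr)\Bigr)^{N},
\]
and observing that this is \emph{formally identical} to Kramer's defining equation~\eqref{phiofP} with $P$ replaced by $G_{11}$, one reads off $\lambda_1=G_{jj}\,\phi(G_{jj})$, i.e., $N\log\beta=\tfrac12\log\bigl(1+NG_{jj}\phi(G_{jj})\bigr)$ is exactly Kramer's sum-rate formula at power $G_{jj}$. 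Choosing $\beta$ with $R(P)<N\log\beta<\tfrac12\log(1+NP\phi(P))$ then forces $\lambda_1<P\phi(P)$ and hence $G_{jj}<P$ by monotonicity of $P\mapsto P\phi(P)$. If you want to repair your proof, replace the embedding step with this matching of functional equations; your computations of $c_j$, $B'GB$, and the symmetry of $\Kb$ already put you most of the way there.
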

\begin{IEEEproof}
By Theorem~\ref{thmlin}, the optimal LTI code based on the symmetric system~\eqref{symmetricA} and the optimal controller in~\eqref{optC} achieves the symmetric MSE exponent $E$, i.e, $E_j=E=\log(\beta), \ j=1,\ldots,N$ with asymptotic powers $\Pb_j=c_j^2\Kb_{jj},\ j=1,\ldots,N$ where $\Kb$ is given in~\eqref{Lyapunov}. Moreover, by Lemma~\ref{lemMSE}, rates $R_j < \log(\beta_j)$, $j=1,\ldots,N$ are achievable and therefore the sum rate
\begin{align}\label{LQGsumrate}
R < N\log(\b)
\end{align} 
is achievable.

On the other hand, let $R(P)$ denote the sum rate achievable by Kramer's code under symmetric per-symbol power $P$. From~\cite{KramerFeedback} we have
\begin{align}
R(P) < \half \log(1+NP\phi(P) ) \label{cl}
\end{align}
where $\phi(P)\in \Real$ is the unique solution in the interval $[1,N]$ to 
\begin{align}\label{phiofP}
(1+NP\phi)^{N-1} = \left(1+P\phi(N-\phi)\right)^N .
\end{align}
Therefore, by picking
\begin{align}\label{sumratebeta}
R(P) < N\log(\beta) <   \log(1+NP\phi(P) )
\end{align}
the LTI code can achieve sum rate equal to $R(P)$. 

It remains to show that the asymptotic powers in the optimal LTI code satisfy
\begin{align}\label{toshow}
\Pb_j < P, \quad j=1,\ldots,N.
\end{align}
Towards this end, we first show that $\Pb_j=G_{jj}$, where $G$ is the solution to~\eqref{riccati}. Note that the DARE given in~\eqref{riccati} can be equivalently written as the following discrete algebraic Lyapunov equation (DALE)
\begin{align}\label{Lyapunov2}
G=(A-BC)'G(A-BC)+C'C
\end{align}
and compared with (\ref{Lyapunov}), the diagonal elements of $G$ and $K$ can be related as follows
\begin{align}\label{GKb}
G_{jj}=c_j^2\Kb_{jj}.
\end{align}
Besides, from~\eqref{powerj} we know $\Pb_j=c_j^2\Kb_{jj}$. Hence, we conclude that the diagonal elements of $G$ represents the asymptotic powers in the optimal LTI code,
\begin{align}\label{PjGj}
\Pb_j=G_{jj}.
\end{align}
Now we will show that if $\beta$ satisfies $\eqref{sumratebeta}$, then $G_{jj} < P$ for the corresponding $G$. First, note that from (\ref{symsum}) and (\ref{symother}) we have 
\begin{align}\label{last}
\Big(1+N\l_1\Big)^{N-1}= \Big(1+\l_1(N-\frac{\l_1}{G_{11}})\Big)^N.
\end{align}
Comparing with (\ref{phiofP}) and noting that $G$ is circulant we get 
\begin{align}\label{lambdaG}
\l_1=G_{jj}\phi(G_{jj}), \quad j=1,\ldots,N.
\end{align}
Now if~\eqref{sumratebeta} holds, from (\ref{symsum}) we know
\begin{align*}
\half \log (1+N\l_1) = N\log(\beta) < \half \log (1+NP\phi(P)) 
\end{align*}
and hence 
\begin{align}\label{lambdaless}
\l_1< P\phi(P).
\end{align} 
From~\eqref{lambdaG} and~\eqref{lambdaless} we have $G_{jj}\phi(G_{jj}) < P\phi(P)$, and $G_{jj} < P$ follows from monotonicity of $\phi(P)$ in $P$. Combined with~\eqref{PjGj}, then~\eqref{toshow} follows. 
\end{IEEEproof}

\section{Conclusion}\label{con}
Communication and control problems have different goals, but  they both deal with information dynamics and in many cases they share similar formulations which can be tackled by tools and techniques developed in both fields. Understanding  the interface between these two theories has become even more important in the last decade, as we have witnessed technological advancements leading to the convergence of computing, communication, and control over networked platforms of  embedded systems.  This paper attempted, \emph{ad rem}, one step in this direction.


\appendices

\section{Proof of Lemma~\ref{infostate}}\label{appleminfo}
Note that without loss of generality we can decompose the encoding functions $f_i$ in~\eqref{transmitted} into two steps as follows. First, based on the past output $Y^{i-1}$, the encoder picks a function
\begin{align}\label{functiont}
\ft_i: \Mc \to X.
\end{align}
Then it transmits
\[X_i=\ft_i(M).\]
With this separation, the encoder (controller) can be defined by the set of mappings
\[\tilde{\pi}_i: \Yc^{i-1} \to \{\ft\}\]
where $\{\ft\}$ is the set of all functions given in~\eqref{functiont}. Note that this new encoder $\{\tilde{\pi_i}\}_{i=1}^n$ only observes the past output $Y^{i-1}$ and not the message $M$. Therefore, one can view the feedback communication as a control problem where the state is $M$ and the controller has partial observations $Y^{i-1}$. Based on  standard results in stochastic control~\cite[Chapter 6]{Kumar}, there is no loss of optimality if the mapping~$\tilde{\pi}_i$ is picked only according to the conditional distribution $F_{M|Y^{i-1}}(\cdot|Y^{i-1})$. 
Therefore, to find $X_i$ it is sufficient to have the message $M$ and its posterior distribution $F_{M|Y^{i-1}}(\cdot|Y^{i-1})$.

\section{Proof of Lemma~\ref{thmMSE}}\label{appthmMSE}
Consider a sequence of $n$-codes for $M \in (0,1)$ such that 
\begin{align}\label{appMSEdef}
E \leq \liminf_{n \to \infty} -\frac{1}{2n}\log(\Dn).
\end{align}
We show that a rate $R < E$ is achievable. Consider
\begin{align}
\P( |M-\Mh| >\half \cdot 2^{-nR}) &\leq 4\cdot2^{2nR}\cdot \Dn \label{Che} \\
& \leq 4\cdot2^{2nR}\cdot 2^{-2n(E-\e_n)} \label{Edefref} \\
&=4\cdot 2^{-2n(E-R-\e_n)} \label{RlessE}
\end{align}
for some $\e_n$ where $\e_n \to 0$ as $n \to \infty$, where \eqref{Che} follows from Chebyshev inequality and \eqref{Edefref} follows from~\eqref{appMSEdef}. If $R < E$, from~\eqref{RlessE} we have
\begin{align}
\P( |M-\Mh| >\half \cdot 2^{-nR}) \to 0 \nn
\end{align}
as $n \to \infty$ and the decoder can pick intervals $\Delta_n$ such that 
\[|\Delta_n|=2^{-nR}\] 
and 
\begin{align}\label{intervaldec}
\P(M \notin \Delta_n) \to 0 \ \tx{as} \ n \to \infty. 
\end{align}
By Lemma II.3 in~\cite{Oferarxiv}, if condition~\eqref{intervaldec} is satisfied then for all~$n$, we can map the message set $\Mc_n=\{1,\ldots, 2^{nR}\}$ into a set of message points $\tilde{\Mc}_n \in (0,1)$ in the unit interval such that the minimum distance between the two message points is $2^{-nR}$. Therefore, using the interval decoder described above which satisfy~\eqref{intervaldec}, $\pen \to 0$ as $n \to \infty$ and rate $R$ is achievable.

To complete the proof, let rate $R$ be achievable by a given sequence of $n$-codes such that $\lim_{n \to \infty} \frac{\pen}{2^{-2nR}} \to 0$. For each~$n$, we divide the unit interval~$(0,1)$ into $2^{-nR}$ equal sub-intervals and map the continuous message $M \in (0,1)$ to the discrete message $M_n \in \Mc_n=\{1,\ldots, 2^{nR}\}$ according to the sub-interval $M$ lies in. To communicate $M$, we send the corresponding $M_n$ using the given $n$-code, and we pick the middle point of the interval corresponding to the decoded message $\Mh_n$ as the estimate of $M \in (0,1)$. The MSE for $M$ can be (loosely) upper bounded by $2^{-2nR}$ if the message $M_n$ is decoded correctly, and by $1$ in case of an error. Hence,
\begin{align}\label{errnoerr}
\Dn &\leq \pen + (1-\pen) 2^{-2nR}
\end{align}
From~\eqref{errnoerr} and the assumption $\lim_{n \to \infty} \frac{\pen}{2^{-2nR}} \to 0$ we have
\[\liminf_{n \to \infty} -\frac{1}{2n} \Dn \geq R\] 
and the MSE exponent $E=R$ is achievable.

\section{Proof of Lemma~\ref{generalupp}}\label{appgeneralupp}
We show that 
\[h(Y^n) \geq h(Z^n)+I(S_0;Y^n) \]
with equality if and only if $X_i$ is a function of $(S_0,Y^{i-1})$ for $i=1,\ldots,n$. 

Consider
{\allowdisplaybreaks 
\begin{align}
h(Y^n)&=h(Y^n|S_0)+I(S_0;Y^n)\nn\\
&=\sum_{i=1}^n h(Y_i|Y^{i-1}, S_0)+I(S_0;Y^n)\nn\\
&\geq \sum_{i=1}^n h(Y_i|Y^{i-1}, X^i, S_0)+I(S_0;Y^n)\label{addV}\\
&=\sum_{i=1}^n h(Y_i, Z_i|Y^{i-1}, X^i, S_0)+I(S_0;Y^n)\label{delX}\\
&=\sum_{i=1}^n h(Z_i|Y^{i-1}, X^i, S_0)+I(S_0;Y^n) \nn\\
&= \sum_{i=1}^n h(Z_i|Y^{i-1}, X^i, S_0,Z^{i-1})+I(S_0;Y^n)\label{AddZ}\\
&=\sum_{i=1}^n h(Z_i|Z^{i-1})+I(S_0;Y^n)\label{delRest}\\
&=h(Z^n)+I(S_0;Y^n) \label{thm4.2} 
\end{align}}
where 
\begin{itemize}
\item the inequality~\eqref{addV} comes from the fact that conditioning reduces entropy, and equality holds iff $Y_i \to (S_0,Y^{i-1}) \to X_i$ form a Markov chain for $i=1,\ldots,n$, and since $Y_i=X_i+Z_i$ this Markov chain holds iff $X_i$ is a deterministic function of $(S_0,Y^{i-1})$,
\item the equality \eqref{delX} and \eqref{AddZ} follows from the fact that the channel is invertible, and 
\item the equality~\eqref{delRest} follows since $Z_i \to Z^{i-1} \to (Y^{i-1},X^i,S_0)$ form a Markov chain.
\end{itemize}



\section{Proof of Lemma~\ref{1-1}}\label{applem1-1}
The system dynamics given in~(\ref{systemmac}) can be rewritten as
\begin{align}
\Sv_{i}&=A \Sv_{i-1}+B Y_{i-1} \nn\\
    &=A^i \Sv_0+\Shv_{i} \label{SiShi}
\end{align}
where $\Shv_i$ is given in~\eqref{decoder}. Plugging $i=n$ and $\Sv_0=\Mv$ into~\eqref{SiShi} and multiplying both sides by $A^{-n}$ we have 
\begin{align*}
A^{-n}\Sv_n&=\Mv+A^{-n}\Shv_n \\
&=\Mv-\hat{\Mv}_n
\end{align*}
where $\hat{\Mv}=-A^{-n}\Shv_n$ is the estimate of the message (see Definition~\ref{LTIcode}). The covariance matrix of the error vector $\ev_n:= \Mv - \hat{\Mv}_n=A^{-n}\Sv_n$ can be written as 
\[\cov(\ev_n)=A^{-n}K_n {A'}^{-n} \]
where $K_n:=\mbox{Cov}(\Sv_n)$ is the covariance matrix of $\Sv_n$. Therefore, the MSE for the sender $j$ is 
\begin{align}
\Dn_j&=\E\left(|\ev_n(j)|^2\right) = \beta_j^{-2n} (K_n)_{jj}. \label{Ejbeta}
\end{align}
By the assumption of stability $\limsup_{n \to \infty}(K_n)_{jj} < \infty$ and from~\eqref{Ejbeta} we have
\begin{align}
 \liminf_{n \to \infty} -\frac{1}{2n}\log(\Dn_j)= \log(\beta_j). \nn
 \end{align}
Hence, the MSE exponents $E_j=\log(\beta_j)$ for $j=1,\ldots, N$ are achievable. 


\section{Proof of Lemma~\ref{symlem}}\label{appsymlem}
Let $A$ and $B$ be of the form \eqref{Aform} with symmetric parameters $\b_j=\b$ and $\omega_j= e^{2\pi \sqrt{-1} \frac{(j-1)}{N}}$. Then we show that the unique positive-definite solution $G \succ 0 $ 
the following discrete algebraic Riccati equation (DARE)
\begin{align}\label{riccatiapp}
G=A'GA-A'GB(B'GB+1)^{-1}B'GA
\end{align}
is circulant with real eigenvalues satisfying $\lambda_i=\frac{1}{\b^2} \lambda_{i-1}$ for $i=2,\ldots, N$, and the largest eigenvalue $\l_1$ satisfies
\begin{align*}
 1+N\lambda_1 &= \b^{2N} \\ 
\Big(1+\lambda_1\big(N-\frac{\lambda_1}{G_{11}}\big)\Big) &= \b^{2(N-1)}. 
\end{align*}
Note that any circulant matrix can be written as $Q\Lambda Q'$, where $Q$ is the $N$ point DFT matrix with
\begin{align}
Q_{jk}=\frac{1}{\sqrt{N}}e^{-2\pi \sqrt{-1} (j-1)(k-1)/N} \label{Qform}
\end{align}
and $\Lambda=\mbox{diag}([\lambda_1, \ldots, \lambda_N])$ is the matrix with eigenvalues on its diagonal. We show that the circulant matrix $\Gt= Q\Lambda Q'$ with positive $\lambda_j > 0$, such that $\lambda_j=\lambda_{j-1}/\beta^2$ for $j \geq 2$, satisfies the Riccati equation (\ref{riccatiapp}). Plugging $Q\Lambda Q'$ into (\ref{riccatiapp}) and rearranging we get
\begin{align*}
\Lambda&= (Q'AQ) \Lambda (Q'AQ)' - ((Q'AQ) \  \Lambda \  (Q'B))  \\ 
&\hspace{.5in} (1+ B'Q \Lambda Q'B)^{-1} ((Q'AQ) \  \Lambda \ (Q'B))'.
\end{align*}
For this symmetric choice of $A$ we have
\begin{align*}
Q'AQ& = \b \left( \begin{array}{ccccc}
          0 &   1            & 0& \ldots      & 0  \\
          0 &   0              & 1& \ldots        &  0 \\
                \vdots &  \vdots         &  \ddots &\ddots      & \vdots               \\
           0 &  0               & \ldots    &   0 &  1\\
           1&      0            & \ldots     & 0  & 0 \\
                 \end{array} \right),    Q'B &=  \left( \begin{array}{c}
          \sqrt{N}\\
           0\\
           \vdots \\
           0\\
                 \end{array} \right).
     \end{align*}
Hence,
\begin{align*}
(Q'AQ)\Lambda (Q'AQ)' &= \b^2 \left( \begin{array}{ccccc}
          \lambda_2 &   0      & \ldots        & 0  \\
          0  &    \lambda_3              & \ldots        &  0 \\
              \vdots &  \vdots         &  \ddots       & \vdots               \\
           0&      0            & \ldots      & \lambda_1\\
                 \end{array} \right)\\
 (Q'AQ)\Lambda (Q'B) &=  \left( \begin{array}{c}
          0\\
           0\\
           \vdots \\
          \b\lambda_1 \sqrt{N} \\
                 \end{array} \right)
     \end{align*}
and the Riccati equation is transformed into $N$ diagonal equations. The first $N-1$ equations are
\begin{align}
\lambda_j=\beta^2 \lambda_{j+1}, \quad j=1,\ldots, N-1 \label{eigenrecursion}
\end{align}
and the $N$-th equation is
\begin{align}
\lambda_N= \b^2 \l_1- \frac{\b^2\l^2_1 N}{1+N\l_1}. \label{ntheq}
\end{align}
From (\ref{eigenrecursion}) we see that $\lambda_1$ is the largest eigenvalue and $\l_N= \b^{-2(N-1)} \l_1$. Combining with (\ref{ntheq}) we get 
\begin{align}\label{bet2N}
(1+N\l_1)=\b^{2N}.
\end{align}
Hence, $\l_1$ is real and so are $\l_j, j=2,\ldots,N$. 

On the other hand, we consider the diagonal equations of the original DARE in~\eqref{riccatiapp}. First, note that from the form of $Q$ in (\ref{Qform}), $\l_1=\sigma_1$ where
\[\sigma_j:=\sum_{k=1}^N \Gt_{jk}. \quad \] 
Moreover, since $\Gt$ is circulant we know that $\sigma_j=\sigma_1$ and $G_{jj}=G_{11}$ for all~$j=1,\ldots, N$, and $(1+B'\Gt B)=1+N\lambda_1$. Hence, the diagonal equations of \eqref{riccatiapp}, i.e., 
\begin{align}
\Gt_{jj}=\b^2 \Gt_{jj}  - \b^2 \frac{\sigma^2_j }{(1+B'\Gt B)}, \quad j=1,\ldots,N
\end{align}
are equivalent to
\begin{align}
\b^2= \frac{1+N\lambda_1}{1+\l_1\Big(N-\frac{\l_1}{\Gt_{11}}\Big)}
\end{align}
and by $(\ref{bet2N})$ we have
\begin{align}\label{beta2N-1}
\Big(1+\lambda_1\big(N-\frac{\lambda_1}{\Gt_{jj}}\big)\Big) &= \b^{2(N-1)}.
\end{align}
Combining \eqref{eigenrecursion}, \eqref{bet2N}, and \eqref{beta2N-1} completes the proof. 
\section*{Acknowledgments}
The authors would like to thank Tara Javidi, Young-Han Kim, Paolo Minero, and Ofer Shayevitz for valuable discussions.

\bibliographystyle{IEEEtran}
\bibliography{bibliography}
\end{document}